\newcommand{\trnumber}{TKN-15-002}
\newcommand{\trdate}{March 2015}
\newcommand{\trauthor}{Niels Karowski, Konstantin Miller}
\newcommand{\tremail}{\{karowski,\,miller\}@tkn.tu-berlin.de}
\newcommand{\trtitle}{Optimized Asynchronous Passive Multi-Channel Discovery of Beacon-Enabled Networks}
\let\stdforall\forall
\renewcommand{\forall}{\mathop{\vphantom{\sum}\mathchoice
  {\vcenter{\hbox{\huge$\stdforall$}}}
  {\vcenter{\hbox{\Large$\stdforall$}}}{\stdforall}{\stdforall}}\displaylimits}
\newcommand{\nie}[1]{\textcolor{red}{[#1]}}
\newcommand{\ALG}{{GREEDY}}
\newcommand{\GreedyRnd}{{GREEDY RND}}
\newcommand{\GreedyDeter}{{GREEDY DTR}}
\newcommand{\GreedyTrainRnd}{{GREEDY RND-SWT}}
\newcommand{\GreedyTrainDeter}{{GREEDY DTR-SWT}}
\newcommand{\GreedyTrain}{{GREEDY SWT}}
\newcommand{\OPTBTwo}{{$OPT_{B2}$}}
\newcommand{\evalFigWidth}{0.7}
\newcommand{\evalHspace}{-0.7em}
\begin{document}


{
\sffamily

\thispagestyle{empty}

\begin{tabularx}{\columnwidth}{cXc}
  \includegraphics[height=1cm]{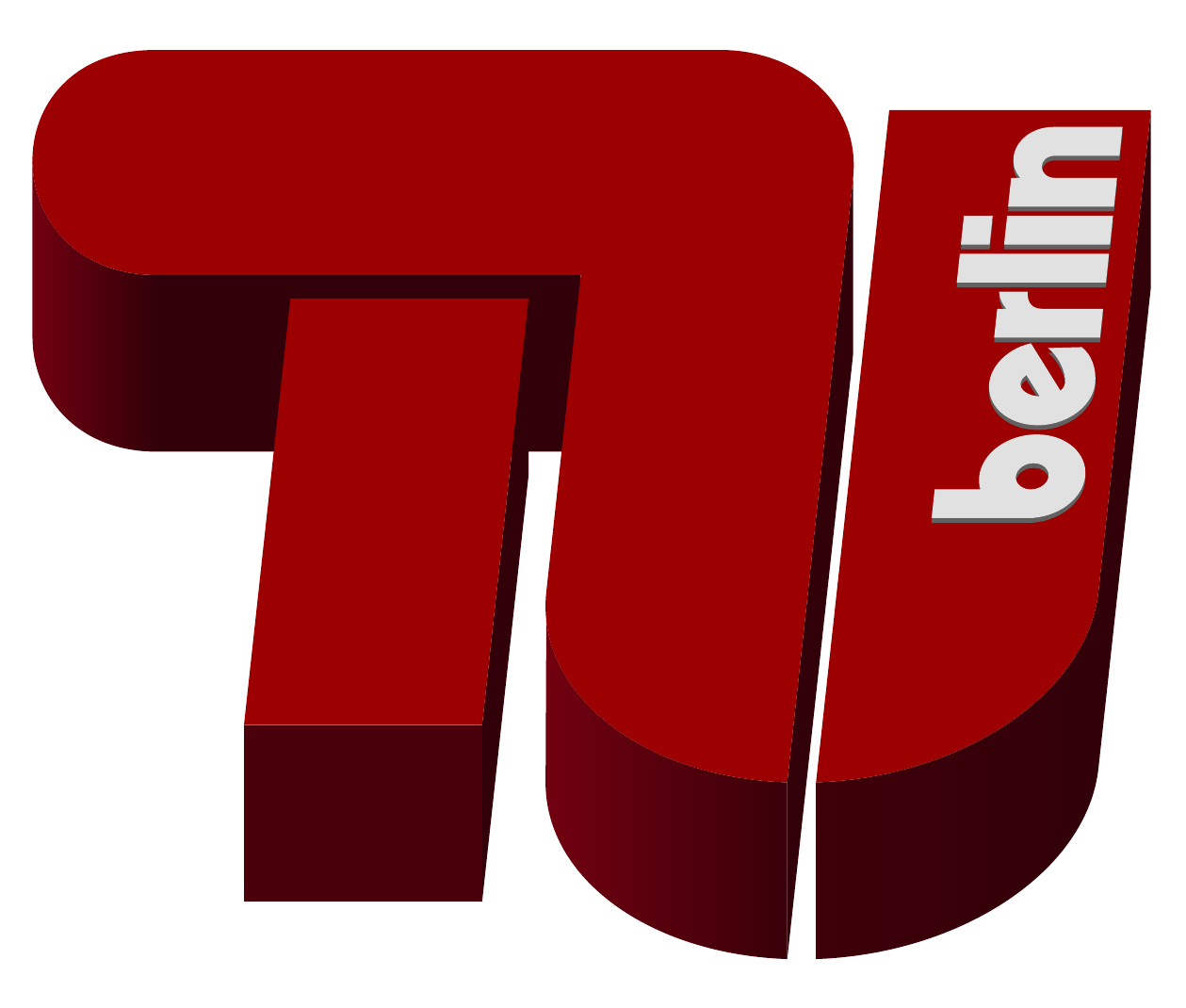}
  & &
  \includegraphics[height=1cm]{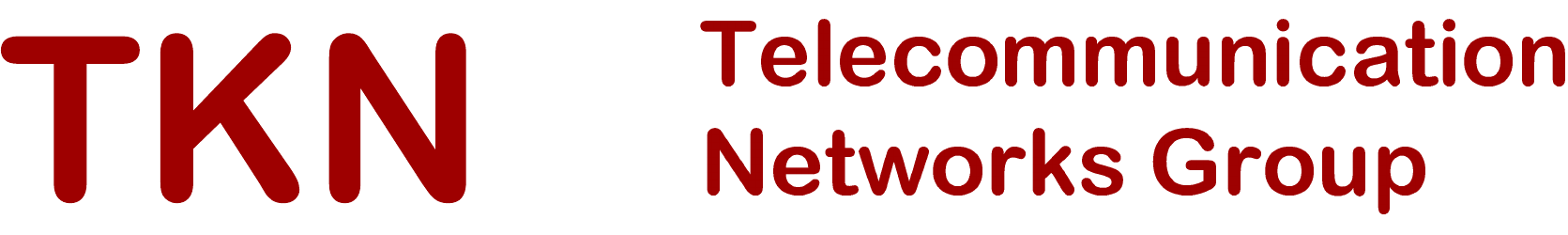}
  \\
\end{tabularx}

\vspace{1.0cm}

\begin{center}
{\huge
\noindent
Technische Universit\"at Berlin

\vspace{0.5cm}

\noindent
Telecommunication Networks Group

\begin{center}
\rule{15.5cm}{0.4pt}
\end{center}
}
\end{center}

\begin{minipage}[][11.0cm][c]{14.5cm}
{\Huge

\begin{center}
\trtitle
\end{center}

\begin{center}
{\LARGE \trauthor} \\
{\Large \tremail}
\end{center}

\begin{center}
Berlin, \trdate
\end{center}

\vspace{0.5cm}

}

\begin{center}
\setlength{\fboxrule}{2pt}\setlength{\fboxsep}{2mm}
\fbox{TKN Technical Report \trnumber}
\end{center}

\end{minipage}

\setlength{\fboxrule}{0.4pt}
\setlength{\fboxsep}{0.4pt}

\begin{center}

  \rule{15.5cm}{0.4pt}

  \vspace{0.5cm}

  {\huge {TKN Technical Reports Series}}

  \vspace{0.5cm}

  {\huge Editor: Prof. Dr.-Ing. Adam Wolisz}

  \vspace{0.5cm}

 \end{center}

}

\begin{abstract}
\subsection*{\abstractname}

Neighbor discovery is a fundamental task for wireless networks deployment. It is essential for setup and maintenance of networks and is typically a precondition for further communication. In this work we focus on passive discovery of networks operating in multi-channel environments, performed by listening for periodically transmitted beaconing messages. It is well-known that performance of such discovery approaches strongly depends on the structure of the adopted \acf{BI} set, that is, set of intervals between individual beaconing messages. However, although imposing constraints on this set has the potential to make the discovery process more efficient, there is demand for high-performance discovery strategies for \ac{BI} sets that are as general as possible. They would allow to cover a broad range of wireless technologies and deployment scenarios, and enable network operators to select \acp{BI} that are best suited for the targeted application and/or device characteristics.

In the present work, we introduce a family of novel low-complexity discovery algorithms that minimize both the \acf{EMDT} and the makespan, for a quite general family of \ac{BI} sets. Notably, this family of \ac{BI} sets completely includes \acp{BI} supported by IEEE~802.15.4 and a large part of \acp{BI} supported by IEEE~802.11. Furthermore, we present another novel discovery algorithm, based on an \ac{ILP}, that minimizes \ac{EMDT} for arbitrary \ac{BI} sets.


In addition to analytically proving optimality, we numerically evaluate the proposed algorithms using different families of \ac{BI} sets and compare their performance with the passive scan of IEEE~802.15.4 w.r.t. various performance metrics, such as makespan, \ac{EMDT}, energy usage, etc.



\end{abstract}

\tableofcontents


\chapter{Introduction}

We are currently at the brink of a new era of computing driven
by rapid augmentation of physical objects around us with
computational and wireless communication capabilities. The
resulting network of “smart” objects that interact and exchange
information without direct human intervention, the so-called
Internet of Things (IoT), can offer deep real-time awareness and
control of the physical environment and serve as foundation for
novel applications in a wide range of domains. 
It is estimated that 25~\cite{gartnerIoT} to 50~\cite{cisco_IoT} billion devices and objects will be connected to the Internet by 2020. Since the vast majority of these devices will communicate wirelessly, this development raise the demand for efficient neighbor discovery approaches. Discovering neighbor networks in a wireless network environment is typically the first step of communication initialization. In addition, if performed periodically, it helps avoiding conflicts by, e.g., selecting a less occupied channel. Note that since most devices will be battery powered, neighbor discovery must be performed in an energy-efficient way.

This paper focuses on the problem of discovering neighbors in a fast and energy-efficient manner by passively listening to periodically transmitted beaconing messages of neighbors that are agnostic to the discovery process. Neighbors may operate on 
different channels using various \acp{BI} based on their targeted application, used hardware, and operational state (e.g., dynamically adapting \acp{BI} to current battery level). Since devices are typically not synchronized and have no a priori knowledge about their environment, designing schedules determining when to listen, on which channel, and for how long, that allow for a fast discovery of all neighbors, is a challenging task.

State-of-the-art beacon-enabled wireless network technologies include, e.g., IEEE~802.11~\cite{ieee80211} and IEEE~802.15.4~\cite{ieee802154}. With these technologies, beacon messages are transmitted periodically for management and time synchronization purposes. The idea of the presented discovery approaches is to use periodic transmission of beacon frames for the discovery process. In contrast, discovery approaches requiring control over the transmission of discovery messages might be in conflict with the deployed \ac{MAC} protocol, while reusing beacon messages allows for more flexibility in working with a broad range of technologies.

It is well-known that performance of discovery approaches strongly depends on the structure of the \ac{BI} set. Although restricting the latter facilitates the discovery process, it is necessary to develop discovery strategies that are as nonrestrictive as possible, in order to cover a broad range of existing wireless technologies, and to allow network operators to select \acp{BI} that are best suited for the targeted application and/or device characteristics. 

Our previous work~\cite{Karowski11, Karowski13} focused on discovery of IEEE~802.15.4 networks, where \acp{BI} are restricted to the form $\tau\cdot 2^{BO}$, where $\tau$ is the duration of a superframe, and \acf{BO} is a parameter taking values between 0 and 14. 
In contrast, in the present work, we focus on developing efficient algorithms for broader families of \ac{BI} sets, including arbitrary \ac{BI} sets that do not have any restrictions.

In particular, we develop a family of novel low-complexity algorithms that "greedily" maximize the number of discovered networks independently in each subsequent time slot. We identify an important family of \ac{BI} sets, where each \ac{BI} is an integer multiple of all smaller \acp{BI}, and prove that for \ac{BI} sets from this family, greedy algorithms are optimal w.r.t. the \ac{EMDT} and the makespan. Notably, the identified family completely includes \ac{BI} sets supported by IEEE~802.15.4 and a large part of \ac{BI} sets supported by IEEE~802.11.

In addition, we present a novel discovery approach, based on an \ac{ILP}, that minimizes \ac{EMDT} for arbitrary \ac{BI} sets. Although quite attractive due to the broad range of supported \acp{BI}, this approach has high computational complexity and memory consumption, such that its usage is restricted to offline computations, and to network environments with a limited number of channels and small \acp{BI}.

In addition to analytically proving optimality of the proposed discovery algorithms, we numerically evaluated them using different families of \ac{BI} sets, and compare their performance with the passive scan, specified as part of the IEEE~802.15.4 standard, w.r.t. various performance metrics, such as makespan, \ac{EMDT}, energy usage, etc.


Finally, we discuss the strong impact the structure of permitted \ac{BI} sets has on the performance of discovery approaches, and provide recommendations for \ac{BI} selection that supports efficient neighbor discovery. This recommendation might be useful, on the one hand, for the development of novel wireless communication based technologies using periodic beacon frames for management or synchronization purposes, and, on the other hand, for the deployment of existing technologies that support a wide range of \acp{BI}, such as, e.g., IEEE~802.11. 

In summary, our main contributions are as follows.

\begin{itemize}
\item A family of low-complexity discovery algorithms that minimize \ac{EMDT} \textbf{and} makespan for the broad family of \ac{BI} sets where each \ac{BI} is an integer multiple of all smaller \acp{BI}. Notably, this family completely includes \acp{BI} supported by the standard IEEE~802.15.4 and a large part of \ac{BI} sets supported by IEEE~802.11.
\item An \ac{ILP}-based approach to computing listening schedules minimizing \ac{EMDT} for arbitrary sets of \acp{BI}.
\item A recommendation on the selection of \ac{BI} sets to support the discovery process.
\end{itemize}

The rest of this paper is structured as follows. 
Section~\ref{sec:system} presents our system model, accompanied by basic definitions and preliminary results. 
In Section~\ref{sec:BIFamilies} we present the families of \ac{BI} sets studied in the present work.
Section~\ref{sec:preliminaries} introduces the notion of \ac{EMDT} and establishes an upper bound on the time required to minimize it.
Section~\ref{sec:strategies} introduces the developed discovery strategies, accompanied by analytical results on their optimality. 
Section~\ref{sec:numerical_eval} presents evaluation settings and results.
In Section~\ref{sec:biSelection}, we provide recommendations on the selection of \acp{BI} supporting the discovery process. 
Finally, Section~\ref{sec:conclusion} concludes this paper and outlines future work.

\pagebreak

\chapter{System Model and Basic Definitions}
\label{sec:system}

We consider a set of networks $N$, operating on a set of channels $C$. Each network $\nu\in N$ is assigned a fixed channel $c_\nu\in C$. Each network announces its presence using beacon signaling messages (beacons, in the following) that, e.g., are sent by a selected device acting as network coordinator. We assume that beacons are sent periodically with a fixed interval of $b_\nu\cdot\tau$ seconds, $b_\nu\in B$, where $B\subset\mathbb{N}^+$ is a finite set of beacon intervals, and $\tau$ is a technology dependent parameter. The set of \acp{BI} $B$ might be determined by, e.g., the choice of communication technology or by specifications provided by network operators. A characterization of \ac{BI} sets studied in the present work will be provided in Chapter~\ref{sec:BIFamilies}.

We divide time into slots of length $\tau$, and number them starting with 1, such that $i$-th time slot contains the time period $\left[(i-1)\tau,\,i\tau\right]$. The parameter $\tau$ is typically determined by the choice of technology. For IEEE~802.15.4, $\tau$ might be the duration of a base superframe, which is approximately 15.36 $ms$ (960 symbols) when operating in the 2.4~GHz frequency band. In case of IEEE~802.11, $\tau$ might be the duration of a \ac{TU}, which equals 1024 $\mu s$. We assume that the maximum beacon transmission time (time required to send one beacon) is smaller than $\tau$. We denote the offset, that is, the first time slot, in which network $\nu$ sends a beacon, by $\delta_\nu\leq b_\nu$. We denote the set of beaconing time slots of a network $\nu$ by $\mathcal{T}_\nu=\left\{\delta_\nu+i \cdot b_\nu\right\}_{i\geq 0}$. We call the resulting set of (channel, time slot) pairs, $\mathcal{B}_\nu=\left\{c_\nu\right\}\times\mathcal{T}_\nu$, the beacon schedule of network $\nu$.

With the given notation, each network $\nu$ is assigned a tuple $\left(c_\nu,b_\nu,\delta_\nu\right)$, which we call a network configuration. Note that assigning multiple networks the same network configuration does not necessarily lead to beacon collisions due to the assumption that the maximum beacon transmission time is shorter than $\tau$. With IEEE~802.15.4, e.g., default beacon size when operating in the 2.4 GHz frequency band is 38 symbols, as compared to the time slot duration $\tau = 960$ symbols.

We denote the set of possible network configurations for a given \ac{BI} set $B$ and a given set of channels $C$ by $K_{BC}=\left\{\left(c,b,\delta\right)\;|\;c\in C,\,b\in B,\,\delta\in\left\{1,\ldots,b\right\}\right\}$. We define a network environment to be a function $E:N\mapsto K_{BC}$ assigning each network a network configuration. Slightly abusing notation, in the following, we use index $\nu$ to refer to a particular network, and index $\kappa$ to refer to a particular network configuration. Thus, we denote a network configuration $\kappa\in K_{BC}$ by $\kappa=\left(c_\kappa,b_\kappa,\delta_\kappa\right)$.
 Analogously, we define $\mathcal{T}_\kappa$ and $\mathcal{B}_\kappa$ as the set of beaconing time slots and the beacon schedule of any network using network configuration $\kappa$.
 
For a time slot $t$ and a \ac{BI} $b$, $\delta_b(t)=\left(t\,\text{mod}\,b\right)+1$ shall denote the unique offset such that a network with configuration $\left(c,b,\delta_b(t)\right)$, $c\in C$, transmits its beacon in time slot $t$. Observe that $\delta_b(t)$ has periodicity $b$, that is, $\delta_b(t)=\delta_b(t+b)$ for each $t$, $b$. Consequently, vector function $\delta(t)=\left(\delta_b(t),\,b\in B\right)$ has periodicity $LCM(B)$, that is, $\delta(t)=\delta\left(t+LCM(B)\right)$ for each $t$. 


We assume that initially each device knows its own network configuration, the set of available \acp{BI} $B$ and the set of channels $C$. In addition, it might know probabilities $\left(p_\kappa,\,\kappa\in K_{BC}\right)$ that a neighbor network might use configuration $\kappa$. Equipped with this knowledge, in order to detect neighbor networks, a device performs neighbor discovery by selecting time slots during which it listens on particular channels in order to overhear beacons transmitted by neighbors, starting with time slot 1. We call the resulting set of (channel, time slot) pairs a listening schedule, denoted by $\mathcal{L}\subset C\times\mathbb{N}$. Since we assume that devices cannot simultaneously listen on multiple channels, we demand $c\neq c'\Rightarrow t\neq t'$ for all $\left(c,\,t\right),\left(c',\,t'\right)\in \mathcal{L}$. 

In a typical network environment, not all configurations will be used, while, at the same time, some configurations might be used by more than one network. We call an environment complete, if each configuration is used by exactly one network.

\begin{restatable}[Complete environment]{definition}{completeenvironment}
\label{def:complete_environment}
For a \ac{BI} set $B\subset\mathbb{N}^+$ and a set of channels $C$, an environment $E:N\mapsto K_{BC}$ is called complete if and only if $E$ is bijective (for each $\kappa\in K_{BC}$ there exists exactly one $\nu\in N$ with $\kappa=\left(c_\nu,b_\nu,\delta_\nu\right)$).
\end{restatable}

Under ideal conditions, when beacons are never lost and devices are not mobile, if a schedule $\mathcal{L}$ contains at least one element from the beacon schedule $\mathcal{B}_\kappa$ for each configuration $\kappa\in K_{BC}$, it allows to discover all neighbor networks in a complete environment. We call such a schedule \emph{complete}.

\begin{restatable}[Complete schedule]{definition}{completeschedule}
\label{def:complete_schedule}
For a \ac{BI} set $B\subset\mathbb{N}^+$ and a set of channels $C$, a schedule $\mathcal{L}\subset C\times\mathbb{N}$ is called complete if and only if $\mathcal{L}\cap\mathcal{B}_\kappa\neq\emptyset$, $\forall \kappa\in K_{BC}$.
\end{restatable}

For a complete schedule $\mathcal{L}$, we denote by $T_\nu\left(\mathcal{L}\right)=\min\left\{t\in\mathcal{T}_\nu\;|\;\left(c_\nu,\,t\right)\in\mathcal{L}\right\}$ the discovery time of network $\nu$. Similarly, we denote by $T_\kappa\left(\mathcal{L}\right)=\min\left\{t\in\mathcal{T}_\kappa\;|\;\left(c_\kappa,\,t\right)\in\mathcal{L}\right\}$ the discovery time of all networks operating with configuration $\kappa$. Whenever the considered schedule is clear from the context, we might simply write $T_\nu$ or $T_\kappa$. An important performance metric for a listening schedule is the time it requires to detect all neighbor networks. We call this time the makespan of a schedule.
\begin{restatable}[Makespan of a schedule]{definition}{makespan}
\label{def:makespan}
For a \ac{BI} set $B\subset\mathbb{N}^+$, and a set of channels $C$, we call the time slot of a complete schedule $\mathcal{L}$ during which the last configuration is detected the makespan of $\mathcal{L}$ and denote it by $T_\mathcal{L}$. That is, $T_\mathcal{L}=\max_{\kappa\in K_{BC}}T_\kappa\left(\mathcal{L}\right)$. 
\end{restatable}

The following proposition and corollary establish a lower bound on the makespan.
\begin{restatable}{proposition}{premaxBC}
\label{prop:premaxBC}
For a \ac{BI} $b\in B\subset\mathbb{N}^+$, an offset $\delta\in\{1,\ldots,b\}$, a set of channels $C$, and a complete schedule $\mathcal{L}$, the earliest time slot until which all configurations $\left\{\left(c,b,\delta\right)\,|\,c\in C\right\}$ can be discovered is $b\left(\left|C\right|-1\right)+\delta$. 
\end{restatable}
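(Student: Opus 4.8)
The plan is to show that the value $b\left(|C|-1\right)+\delta$ is at the same time a lower bound on the latest discovery time among the configurations $\left\{\left(c,b,\delta\right)\,|\,c\in C\right\}$, so that no complete schedule can discover all of them sooner, and is attainable by an explicit schedule. The argument rests almost entirely on the single-channel listening restriction; the rest is a pigeonhole observation.

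First I would record that all $|C|$ configurations in question share one and the same set of beaconing time slots. Since they have identical $b$ and $\delta$, each $\kappa=\left(c,b,\delta\right)$ has $\mathcal{T}_\kappa=\left\{\delta+i\,b\right\}_{i\geq 0}$, a set $S$ that does not depend on $c$; only the channel component of the beacon schedule $\mathcal{B}_\kappa=\left\{c\right\}\times S$ distinguishes them. Listing $S$ in increasing order gives $\delta,\,\delta+b,\,\delta+2b,\ldots$, so its $k$-th smallest element is $\delta+\left(k-1\right)b$, and in particular its $|C|$-th smallest element is exactly $b\left(|C|-1\right)+\delta$.

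The key step is the lower bound. Fix a complete schedule $\mathcal{L}$. For each $c\in C$ the configuration $\left(c,b,\delta\right)$ is discovered at some slot $T_{\left(c,b,\delta\right)}\in S$ with $\left(c,\,T_{\left(c,b,\delta\right)}\right)\in\mathcal{L}$. I claim these $|C|$ discovery slots are pairwise distinct: if two channels $c\neq c'$ had the same discovery slot $t$, then $\mathcal{L}$ would contain both $\left(c,\,t\right)$ and $\left(c',\,t\right)$, contradicting the assumption that the listener occupies at most one channel per time slot. Hence $\left\{T_{\left(c,b,\delta\right)}\,|\,c\in C\right\}$ is a set of $|C|$ distinct elements of $S$. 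Since only the $|C|-1$ elements of index $0,\ldots,|C|-2$ lie strictly below $b\left(|C|-1\right)+\delta$, at least one of these $|C|$ distinct slots must have index $\geq|C|-1$, so $\max_{c\in C}T_{\left(c,b,\delta\right)}\geq b\left(|C|-1\right)+\delta$. This is the assertion that the configurations cannot all be discovered before slot $b\left(|C|-1\right)+\delta$.

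For attainability I would match the $i$-th channel to the $i$-th beacon slot: enumerating $C=\left\{c_0,\ldots,c_{|C|-1}\right\}$, take the pairs $\left(c_i,\,\delta+i\,b\right)$ for $i=0,\ldots,|C|-1$, extending to a complete schedule on the remaining configurations via later slots if needed. This respects the single-channel restriction and discovers each $\left(c_i,b,\delta\right)$ at slot $\delta+i\,b$, so the last of them is found precisely at $b\left(|C|-1\right)+\delta$, meeting the lower bound. The only point requiring care is the distinctness claim in the previous paragraph, which is exactly where the single-channel restriction enters and is the crux of the proposition; everything else reduces to counting the elements of $S$.
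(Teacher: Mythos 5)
Your proof is correct and follows essentially the same argument as the paper's: the single-channel-per-slot restriction forces the $\lvert C\rvert$ configurations, which share the beacon slot set $\left\{\delta+i\,b\right\}_{i\geq 0}$, to be discovered in $\lvert C\rvert$ distinct slots of that set, and the $\lvert C\rvert$-th smallest such slot is $b\left(\lvert C\rvert-1\right)+\delta$. You spell out the pigeonhole step and the attaining schedule explicitly, which the paper leaves implicit, but the underlying idea is identical.
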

\begin{proof}
Note that $\lvert\left\{\left(c,b,\delta\right)\,|\,c\in C\right\}\rvert=\left|C\right|$. Since each configuration in this set has its beacons on a different channel, the number of time slots that have to be scanned cannot be smaller than $\left|C\right|$. Observe that the earliest time slot when this number of corresponding time slots can be reached is $b\left(\left|C\right|-1\right)+\delta$, proving the claim.
\end{proof}

\begin{restatable}{corollary}{maxBC}
\label{cor:maxBC}
For an arbitrary set of \acp{BI} $B\subset\mathbb{N}^+$, a set of channels $C$, and a complete schedule $\mathcal{L}$, it holds $T_{\mathcal{L}}\geq \max(B)\cdot\lvert C\rvert$. We call complete schedules with makespan $\max(B)\cdot\lvert C\rvert$ makespan-optimal.
\end{restatable}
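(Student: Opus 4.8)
The plan is to obtain the bound as an immediate specialization of Proposition~\ref{prop:premaxBC}. By Definition~\ref{def:complete_schedule}, a complete schedule must intersect the beacon schedule of every configuration in $K_{BC}$, so its makespan $T_{\mathcal{L}} = \max_{\kappa\in K_{BC}} T_\kappa(\mathcal{L})$ is bounded below by the discovery time of any single sub-family of configurations that all share the same $(b,\delta)$. I would therefore single out the sub-family that the proposition forces to be discovered latest, and simply read off the resulting bound.

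Concretely, I would instantiate Proposition~\ref{prop:premaxBC} with the largest beacon interval $b = \max(B)$ and the largest admissible offset $\delta = b = \max(B)$; here $\delta = b$ is a legitimate choice because $\delta \in \{1,\ldots,b\}$. The proposition then states that the $|C|$ configurations $\{(c,\max(B),\max(B)) \mid c\in C\}$ — which share the identical beaconing slots $\{\max(B),2\max(B),\ldots\}$ but reside on distinct channels — cannot all be discovered before time slot $\max(B)(|C|-1) + \max(B) = \max(B)\cdot|C|$. Since $\mathcal{L}$ is complete it must discover each of these configurations, so the latest of their discovery times, and hence $T_{\mathcal{L}}$, is at least $\max(B)\cdot|C|$.

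There is essentially no hard step here; the only thing worth verifying is that the chosen pair $(b,\delta)$ indeed maximizes the proposition's bound $b(|C|-1)+\delta$ over all valid pairs, which holds because this expression is monotone increasing in both $b$ and $\delta$ and is therefore maximized at $b=\delta=\max(B)$. I would close by observing that this lower bound justifies the terminology of the corollary: a complete schedule attaining $T_{\mathcal{L}} = \max(B)\cdot|C|$ meets the bound with equality and is thus rightly called makespan-optimal.
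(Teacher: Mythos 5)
Your proposal is correct and follows exactly the paper's own argument: the paper likewise derives the bound as a direct consequence of Proposition~\ref{prop:premaxBC} instantiated with $b=\delta=\max(B)$. Your additional remarks (monotonicity of $b\left(\lvert C\rvert-1\right)+\delta$ and the justification of the term makespan-optimal) are sound elaborations of the same one-line proof.
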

\begin{proof}
The claim is a direct consequence of Proposition~\ref{prop:premaxBC}, with $b=\delta=\max(B)$.
\end{proof}

In addition to minimizing makespan, it is often desirable to minimize the \ac{MDT} of a schedule, defined in the following.
\begin{restatable}[\ac{MDT} of a schedule]{definition}{mdtdef}
\label{def:mdt}
For a set of networks $N$, a \ac{BI} set $B\subset\mathbb{N}^+$, a set of channels $C$, and a network environment $E$, the \ac{MDT} of a complete schedule $\mathcal{L}$ is given by $$\frac{1}{\lvert N\rvert} \sum_{\nu\in N}T_\nu\left(\mathcal{L}\right)\,.$$ For a complete environment, \ac{MDT} can also be computed as 
$$\frac{1}{\lvert N\rvert} \sum_{\nu\in N}T_\nu\left(\mathcal{L}\right)=\frac{1}{\lvert K_{BC}\rvert}\sum_{\kappa\in K_{BC}}T_\kappa\left(\mathcal{L}\right)=\frac{1}{\lvert C\rvert\lvert B\rvert\sum_{b\in B}b}\sum_{\kappa\in K_{BC}}T_\kappa\left(\mathcal{L}\right)\,.$$
\end{restatable}

Typically, however, a device does not know its network environment (otherwise, it would not have to perform a discovery) so that \ac{MDT} cannot be computed and optimized a priori. Nevertheless, a device still can minimize the expected value of the \ac{MDT} given probabilities that a neighbor network is using configuration $\kappa\in K_{BC}$. We will address this question in Chapter~\ref{sec:preliminaries}.

For complete environments, we define a special type of schedules that we call recursive and that we will use to prove optimality of the proposed discovery strategies in the following chapters.
\begin{restatable}[Recursive schedule]{definition}{recursiveschedule}
\label{def:recursive_schedule}
In a complete environment with \ac{BI} set $B\subset\mathbb{N}^+$, and a set of channels $C$, a schedule $\mathcal{L}\subset C\times\mathbb{N}$ is called recursive if and only if 
\begin{itemize}
\itemsep0em
\item for any $t\in\left[1,\,\max(B)\cdot\lvert C\rvert\right]$ there exists a $c\in C$ such that $(c,t)\in\mathcal{L}$ (no idle slots)
\item for any $b\in B$, and for any $t,t'\in\left[1,\,b\cdot\lvert C\rvert\right]$ with $\delta_b(t)=\delta_b(t')$, if $(c,t),\,(c',t')\in\mathcal{L}$ then $c\neq c'$ (no redundant scans)
\end{itemize}
\end{restatable}

\begin{figure}
\begin{center}
\centerline{
  \includegraphics[width=0.5\textwidth]{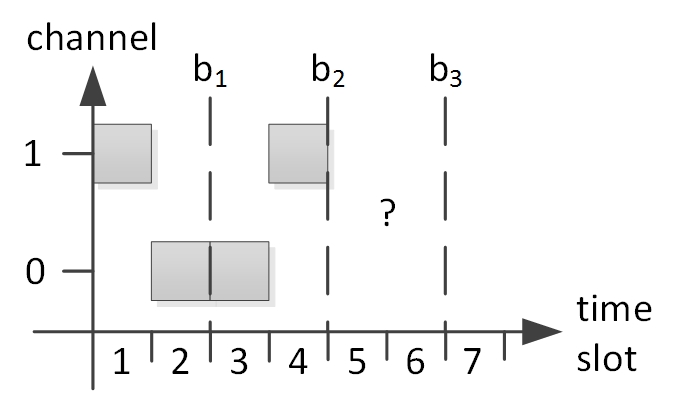}
}
\caption{Example of the non-existence of a structured schedule showing the beginning of the listening schedule depicted by gray boxes for $B = \{1,2,3\}$ and $|C| = 2$.}
 \label{fig:NonStructuredSchedule}
\end{center}
\end{figure}

The following proposition provides a characterizing property of recursive schedules.

\begin{restatable}{proposition}{recursivecharacterization}
\label{prop:recursive_characterization}
In a complete environment with a \ac{BI} set $B=\left\{b_1,\ldots,b_n\right\}\subset\mathbb{N}^+$, $b_i<b_j$ for $i<j$, and a set of channels $C$, a schedule $\mathcal{L}\subset C\times\mathbb{N}$ is called recursive if and only if each scan $(c,t)\in\mathcal{L}$, with $t\in\left[1,\,b\cdot\lvert C\rvert\right]$, results in the discovery of network configuration $\left(c,b,\delta_b(t)\right)$. Alternatively, a schedule is recursive if and only if in each time slot $t\in\left[1,\,b_i\cdot\lvert C\rvert\right]$, for a $b_i\in B$, it discovers at least $n-i+1$ configurations.
\end{restatable}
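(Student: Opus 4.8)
The plan is to prove the two characterizations separately, treating the first (the per-scan discovery property) as the core equivalence and then deriving the counting version (at least $n-i+1$ configurations per slot in $[1, b_i\lvert C\rvert]$) from it.

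First I would establish that the per-scan property is equivalent to the two bullet conditions in Definition~\ref{def:recursive_schedule}. The forward direction (recursive implies per-scan discovery) is the substantive one. Take a scan $(c,t)\in\mathcal{L}$ with $t\in[1,\,b\lvert C\rvert]$ for some $b\in B$. By definition, $\delta_b(t)$ is the offset for which a configuration $(c,b,\delta_b(t))$ beacons in slot $t$, so the scan \emph{potentially} discovers this configuration; what must be shown is that it is in fact the \emph{first} discovery of $(c,b,\delta_b(t))$, i.e.\ that $T_{(c,b,\delta_b(t))}=t$. Here I would invoke the ``no redundant scans'' condition: any earlier scan $(c,t')\in\mathcal{L}$ on the same channel $c$ with $t'<t\leq b\lvert C\rvert$ and $\delta_b(t')=\delta_b(t)$ would force $c\neq c$, a contradiction; hence no earlier slot on channel $c$ shares the offset $\delta_b(t)$ modulo $b$, so $t$ is the earliest beaconing slot of $(c,b,\delta_b(t))$ that lies in $\mathcal{L}$. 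For the converse, I would argue the contrapositive: if either bullet fails, one can exhibit a scan that does not produce a first-time discovery (an idle slot wastes a slot that the ``no idle slots'' condition needs, and a redundant pair yields a second scan of the same configuration, which by definition discovers nothing new).

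Next I would derive the counting characterization from the per-scan one. Fix $b_i\in B$ and a slot $t\in[1,\,b_i\lvert C\rvert]$. Since $b_i\leq b_j$ for every $j\geq i$, the slot $t$ also lies in $[1,\,b_j\lvert C\rvert]$ for all such $j$, so by the per-scan property the single scan in slot $t$ simultaneously discovers a configuration with each of the $n-i+1$ beacon intervals $b_i,b_{i+1},\ldots,b_n$. I would note these are genuinely distinct configurations because they differ in their $b$-component, giving at least $n-i+1$ discoveries. For the reverse implication (counting implies per-scan), I would use the no-idle-slots structure to argue that exactly $\lvert C\rvert$ slots fall in the range for each $b_i$, and a pigeonhole/counting argument shows the per-slot discovery counts can only sum correctly if every scan is a first-time discovery with no waste.

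The main obstacle I expect is the bookkeeping in the forward counting step: cleanly arguing that a slot $t\le b_i\lvert C\rvert$ discovers a fresh configuration for \emph{each} larger interval simultaneously, without double-counting and while confirming each is a genuine first discovery rather than a repeat. This requires carefully combining the periodicity $\delta_b(t)=\delta_b(t+b)$ with the channel-distinctness from the no-redundant-scans condition, applied at each scale $b_j$ independently. The example in Figure~\ref{fig:NonStructuredSchedule} ($B=\{1,2,3\}$, $\lvert C\rvert=2$) signals that such schedules need not exist for arbitrary $B$, so I would be careful to keep the argument purely conditional --- characterizing recursive schedules \emph{when they exist} --- rather than asserting existence, which presumably is handled separately for the restricted \ac{BI} families introduced later.
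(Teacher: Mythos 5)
Your forward direction (recursive $\Rightarrow$ per-scan discovery) is sound, and it is exactly the unpacking the paper has in mind (the paper's own proof is just ``follows directly from the definition''): the no-redundant-scans bullet, applied to any earlier $t'<t\le b\lvert C\rvert$ on the same channel, is precisely what makes the detection at $t$ a \emph{first} detection. However, your converse argument has a genuine gap in the idle-slot case. The per-scan property quantifies only over scans that are actually in $\mathcal{L}$; an idle slot produces no scan at all, so it can never witness a failure of that property. Your claim that ``an idle slot wastes a slot that the no-idle-slots condition needs'' is not a contrapositive argument, and in fact the implication you are trying to establish is false as literally stated: for $B=\{1\}$ and $\lvert C\rvert=2$, the schedule $\mathcal{L}=\left\{(0,1),\,(1,3)\right\}$ satisfies the per-scan property (the only scan inside $\left[1,\,b\lvert C\rvert\right]=\left[1,2\right]$ is $(0,1)$, and it is a fresh discovery) and is even complete, yet slot $2$ is idle, so $\mathcal{L}$ is not recursive. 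So the first characterization only becomes an equivalence if the no-idle-slots requirement is kept as a side condition (or read into the statement); a correct proof must say this explicitly rather than paper over it --- this is an imprecision inherited from the proposition itself, which the paper's one-line proof never confronts.

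The reverse of the counting characterization also needs more than your sketch provides. From ``at least $n-i+1$ discoveries in every slot of $\left[1,\,b_i\lvert C\rvert\right]$'' one does get no idle slots for free (take $i=n$), but excluding redundant scans requires a genuine pigeonhole computation: summing the per-slot lower bounds over $\left[1,\,b_n\lvert C\rvert\right]$ gives $\lvert C\rvert\sum_{i=1}^{n}b_i$, which is exactly the total number of configurations in $K_{BC}$, so every slot must meet its bound with equality and every configuration must be discovered inside its window $\left[1,\,b_j\lvert C\rvert\right]$; only after extracting these equality constraints, interval by interval, can one contradict a hypothetical redundant scan (a scan on a channel whose interval-$b_i$ configuration was already found forces some slot to miss its quota). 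Your phrase ``the counts can only sum correctly if every scan is a first-time discovery'' points at this but does not carry it out, and that bookkeeping is the actual mathematical content of this direction. Finally, a minor circularity: you say you derive the counting property ``from the per-scan one,'' but your argument invokes ``the single scan in slot $t$,'' i.e., it silently uses the no-idle-slots bullet again --- fine if you derive the counting property from recursiveness (both bullets), unjustified if you derive it from the per-scan property alone.
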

\begin{proof}
Follows directly from the definition of a recursive schedule.
\end{proof}

Recursive schedules have a compelling property that they are always complete, makespan-optimal, and \ac{MDT}-optimal, as stated in the following Corollary. In the following chapters, we will define families of \ac{BI} sets where recursive schedules always exist, and present algorithms for their efficient computation.

\begin{restatable}{corollary}{recursiveoptimal}
\label{cor:recursive_optimal}
In a complete environment with a \ac{BI} set $B\subset\mathbb{N}^+$, a recursive schedule is complete, makespan-optimal, \ac{MDT}-optimal, and maximizes the number of configurations detected until each time slot, for each $B'\subset B$.
\end{restatable}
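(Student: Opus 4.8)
The plan is to verify each of the four claimed properties—completeness, makespan-optimality, MDT-optimality, and maximality of detected configurations over every subset—by leaning directly on the characterizing property established in Proposition \ref{prop:recursive_characterization}. That proposition tells us that in a recursive schedule, every scan $(c,t)\in\mathcal{L}$ with $t\in\left[1,\,b\cdot\lvert C\rvert\right]$ discovers the configuration $\left(c,b,\delta_b(t)\right)$, and equivalently that by time slot $b_i\cdot\lvert C\rvert$ at least $n-i+1$ configurations (per channel class) have been detected. This is the engine that drives every part of the corollary, so the first step is simply to restate it and fix notation: $B=\{b_1,\dots,b_n\}$ with $b_1<\cdots<b_n$, so $\max(B)=b_n$.

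First I would establish \textbf{completeness and makespan-optimality} together. By the characterizing property, within the window $\left[1,\,b_n\cdot\lvert C\rvert\right]$ the schedule discovers, for each configuration $\left(c,b_n,\delta\right)$, every offset $\delta\in\{1,\dots,b_n\}$ on every channel $c\in C$ (the ``no idle slots'' and ``no redundant scans'' conditions guarantee each of the $b_n\cdot\lvert C\rvert$ slots yields a fresh configuration). Since $b_n$ is the largest BI, discovering all configurations with $b=b_n$ forces all smaller-BI configurations to already be covered as well, because a recursive scan at slot $t$ in the range of a smaller $b_i$ hits the corresponding $b_i$-configuration. Hence $\mathcal{L}\cap\mathcal{B}_\kappa\neq\emptyset$ for every $\kappa$, so $\mathcal{L}$ is complete (Definition \ref{def:complete_schedule}), and the last configuration is detected no later than slot $b_n\cdot\lvert C\rvert=\max(B)\cdot\lvert C\rvert$, which exactly matches the lower bound of Corollary \ref{cor:maxBC}; thus $\mathcal{L}$ is makespan-optimal.

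Next I would address \textbf{MDT-optimality} via a slot-by-slot exchange/counting argument. For any complete schedule $\mathcal{L}'$ and any slot $t$, the number of \emph{distinct} configurations detectable by time $t$ is bounded, and Proposition \ref{prop:recursive_characterization} says the recursive schedule attains the maximum possible count at every slot $t\in\left[1,\,b_i\cdot\lvert C\rvert\right]$. Since MDT is $\frac{1}{\lvert K_{BC}\rvert}\sum_{\kappa}T_\kappa(\mathcal{L})$ and $\sum_\kappa T_\kappa=\sum_{t\geq 1}\bigl(\lvert K_{BC}\rvert-\#\{\kappa:T_\kappa\le t\}\bigr)$, minimizing the sum of discovery times is equivalent to maximizing, at every slot $t$, the number of configurations already discovered. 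A recursive schedule does exactly this pointwise, so it minimizes the sum term-by-term and is therefore MDT-optimal. The final claim—maximality of detected configurations up to each slot \emph{for each $B'\subset B$}—follows from the same reasoning applied to the subfamily $B'$: restricting attention to configurations with $b\in B'$, the characterizing property still guarantees that each scan in the relevant window discovers a fresh $B'$-configuration, so the per-slot count is again maximal.

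The main obstacle I anticipate is making the pointwise-optimality argument for MDT fully rigorous: I must show that \emph{no} competing complete schedule can have strictly more configurations discovered by some slot $t$ than the recursive one. This requires a clean counting bound—essentially that in any $\lvert C\rvert$ consecutive slots at most $\lvert C\rvert$ configurations can be discovered (one per channel, since the device listens on a single channel per slot), refined per BI-class exactly as in Proposition \ref{prop:premaxBC}. Once this ceiling is in place, matching it is precisely the content of the recursive property, and the term-by-term domination in the summation formula yields optimality cleanly; the $B'$ generalization then needs only the observation that recursivity is inherited on any sub-BI-set, which is immediate from the definition's quantification over all $b\in B$.
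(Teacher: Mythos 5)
Your proposal is correct and takes essentially the same route as the paper, whose entire proof is the one-line citation of Proposition~\ref{prop:recursive_characterization}; your layer-cake identity $\sum_\kappa T_\kappa(\mathcal{L})=\sum_{t}\bigl(\lvert K_{BC}\rvert-\#\{\kappa:T_\kappa\le t\}\bigr)$ together with the cumulative counting bound is precisely the detail that citation leaves implicit. One small caution: the counting ceiling should be stated per BI class (at most one configuration of each BI $b$ per slot, and at most $b\lvert C\rvert$ in total per class, giving the cumulative bound $\sum_{b\in B}\min\left(t,\,b\lvert C\rvert\right)$), not as ``at most $\lvert C\rvert$ configurations in $\lvert C\rvert$ consecutive slots,'' though your own remark about refining per BI class already points in the right direction.
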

\begin{proof}
Follows from Proposition~\ref{prop:recursive_characterization}.
\end{proof}

It is worth noting that a recursive schedule does not always exist. Consider the example illustrated in Figure~\ref{fig:NonStructuredSchedule}, with $B = \{1,2,3\}$ and $|C| = 2$. Vertical dashed lines indicate the time slots until all configurations $\left(c,b_i,\delta\right)$ for a \ac{BI} $b_i$ have to be completely discovery on each channel $c \in C$ in order for the schedule to be recursive. Consequently, gray boxes indicate channels that have to be scanned during the first 4 time slots (uniquely determined up to swapping channel 0 and 1). Observe that a recursive listening schedule for this example does not exist due to the fact that it would have to discover all remaining network configurations using \ac{BI} $b_3$ during only two time slots 5 and 6. However, there are three remaining configurations: $(0, 3, 1)$, $(1, 3, 2)$, and $(1, 3, 3)$.

Finally, we would like to remark that the analytical optimality results in the following are obtained under three idealizing assumptions. 
\begin{itemize}
\itemsep0em 
	\item Switching between channels is performed instantaneously (switching time is 0).
	\item Beacon transmission/reception time is 0.
	\item There are no beacon losses.
\end{itemize}
It is ongoing work to evaluate the performance of the proposed algorithms by means of simulations and experiments in real network environments.

Table~\ref{tab:system_model_parameter} provides a summary of all defined parameters.

\begin{table*}
\caption{Summary of the defined parameters}
\label{tab:system_model_parameter}
\centering
	\begin{tabular}{p{5cm}p{7cm}}
	\hline
			Name & Description \\
			\hline
			\textbf{General}																													& \\
			$\tau$																													& Time slot duration\\
			$C\subset\mathbb{N}^+$																					& Set of channels \\
			$B\subset\mathbb{N}^+$																					& Set of \acp{BI}\\
			$LCM(B)$																												& Least common multiple of a set $B$\\	
			& \\
			\textbf{Network} &\\
			$N$																															& Set of networks \\
			$\nu \in N$																											& Network $\nu$ \\
			$c_\nu \in C$																										& Fixed operating channel of network $\nu$\\
			$b_\nu \in B$																										& \ac{BI} of network $\nu$ \\
			$\delta_\nu \in \{1, \ldots, b_\nu \}$																					& Beacon offset of network $\nu$\\
			$\left(c_\nu,b_\nu,\delta_\nu\right)$														& Configuration of network $\nu$  \\
			$\mathcal{T}_\nu=\left\{\delta_\nu+i \cdot b_\nu\right\}_{i\geq 0}$	  & Set of beaconing time slots of network $\nu$\\
			$\mathcal{B}_\nu=\left\{c_\nu\right\}\times\mathcal{T}_\nu$		  & Beacon schedule of network $\nu$ \\
			$\begin{aligned} T_\nu\left(\mathcal{L}\right)= & \min\left\{t\in\mathcal{T}_\nu\;|  \right. \\ &\left.\left(c_\nu,\,t\right)\in\mathcal{L}\right\} \end{aligned}$ & Discovery time of network $\nu$, given listening schedule $\mathcal{L}$ \\	
			& \\
			\textbf{Network configuration} & \\
			$\begin{aligned} K_{BC}= & \left\{\left(c,b,\delta\right)\;|\;c\in C,\,  \right. \\ &\left. b\in B,\,\delta\in\left\{1,\ldots,b\right\}\right\} \end{aligned}$ & Set of possible network configurations for a given \ac{BI} set $B$ and a given set of channels $C$\\
			$\kappa=\left(c_\kappa,b_\kappa,\delta_\kappa\right) \in K_{BC}$	& Network configuration $\kappa$ using \ac{BI} $b_\kappa$, channel $c_\kappa$ and offset $\delta_\kappa$ \\
			$\mathcal{T}_\kappa=\left\{\delta_\kappa+i \cdot b_\kappa\right\}_{i\geq 0}$	  & Set of beaconing time slots of networks operating with configuration $\kappa$\\
			$\mathcal{B}_\kappa=\left\{c_\kappa\right\}\times\mathcal{T}_\kappa$		  			& Beacon schedule of networks operating with configuration $\kappa$ \\
			$\begin{aligned} T_\kappa\left(\mathcal{L}\right)= & \min\left\{t\in\mathcal{T}_\kappa\;| \right. \\ &\left.\left(c_\kappa,\,t\right)\in\mathcal{L}\right\} \end{aligned}$ & Discovery time of all networks operating with configuration $\kappa$ \\
			& \\
			\textbf{Listening schedule} & \\
			$\mathcal{L}\subset C\times\mathbb{N}$													& Listening schedule consisting of a sequence of (channel, time slot) pairs\\
			$T_\mathcal{L}=\max_{\kappa\in K_{BC}}T_\kappa\left(\mathcal{L}\right)$ & Makespan of a listening schedule $\mathcal{L}$\\
	\hline
	\end{tabular}
\end{table*}

\chapter{Families of Beacon Interval Sets}
\label{sec:BIFamilies}

In this section we define several families of \ac{BI} sets, studied in the present work. They allow us to formulate properties of developed discovery strategies as functions of \ac{BI} sets. The defined families are characterized in the following table and illustrated in Figure~\ref{fig:BIFamilies}.

\def\tabularxcolumn#1{m{#1}}
\begin{table}[h]
\begin{center}
\begin{tabularx}{\textwidth}{|| l | X ||}
\hline\hline
$\mathbb{F}_1$ & This is the most general family of \ac{BI} sets that contains any finite subset of $\mathbb{N}^+$. \\\hline
$\mathbb{F}_2$ & Family $\mathbb{F}_2\subset\mathbb{F}_1$ includes all \ac{BI} sets, for which the maximum element equals the \ac{LCM} of the set: $\max(B)=LCM(B)$, that is, the maximum beacon interval is an integer multiple of all other beacon intervals in the set. \\\hline
$\mathbb{F}_3$ & For any \ac{BI} set $B\in\mathbb{F}_3\subset\mathbb{F}_2$, we demand that $\max(B')=LCM(B')$ holds for any subset $B'\subseteq B$. An equivalent formulation is that for any $b,b'\in B\in\mathbb{F}_3$ with $b<b'$ there exists an $\alpha\in\mathbb{N}^+$ such that $b'=\alpha\cdot b$. That is, any beacon interval is an integer multiple of any smaller beacon interval in the set. \\\hline
$\mathbb{F}_4$ & Family $\mathbb{F}_4\subset\mathbb{F}_3$ includes all sets whose elements are powers of the same base, potentially multiplied by an common coefficient. That is, for a $B=\left\{b_1,\ldots,b_n\right\}\in\mathbb{F}_4$ we demand that there exist $k,c\in\mathbb{N}^+$ and $e_1,\ldots,e_n\in\mathbb{N}$ such that $b_i=kc^{e_i}$, $\forall i\in\{1,\ldots,n\}$. \\\hline
$\mathbb{F}_{\textrm{IEEE 802.15.4}}$ & This family contains all sets defined by the IEEE~802.15.4 standard. All elements $b\in B\in\mathbb{F}_{\textrm{IEEE 802.15.4}}$ must have the form $b=2^{BO}$, where $BO$ is a network parameter called the beacon order that can be assigned a value between 0 and 14. Possible intervals lie in the range between approx. 15.36 $ms$ and 252 $s$. \\\hline
$\mathbb{F}_{\textrm{IEEE 802.11}}$ & The family of \ac{BI} sets allowed by the IEEE~802.11 standard contains arbitrary sets such that each element $b\in B\in \mathbb{F}_{\textrm{IEEE 802.11}}$ can be represented by a 16 bit field, that is, $b\in\left[1,\,2^{16}-1\right]$. \\
\hline\hline
\end{tabularx}
\end{center}
\label{tab:sets}
\end{table}

\begin{figure}[t]
\captionsetup{justification=centering}
\begin{center}
\centerline{
  \includegraphics[width=0.7\textwidth]{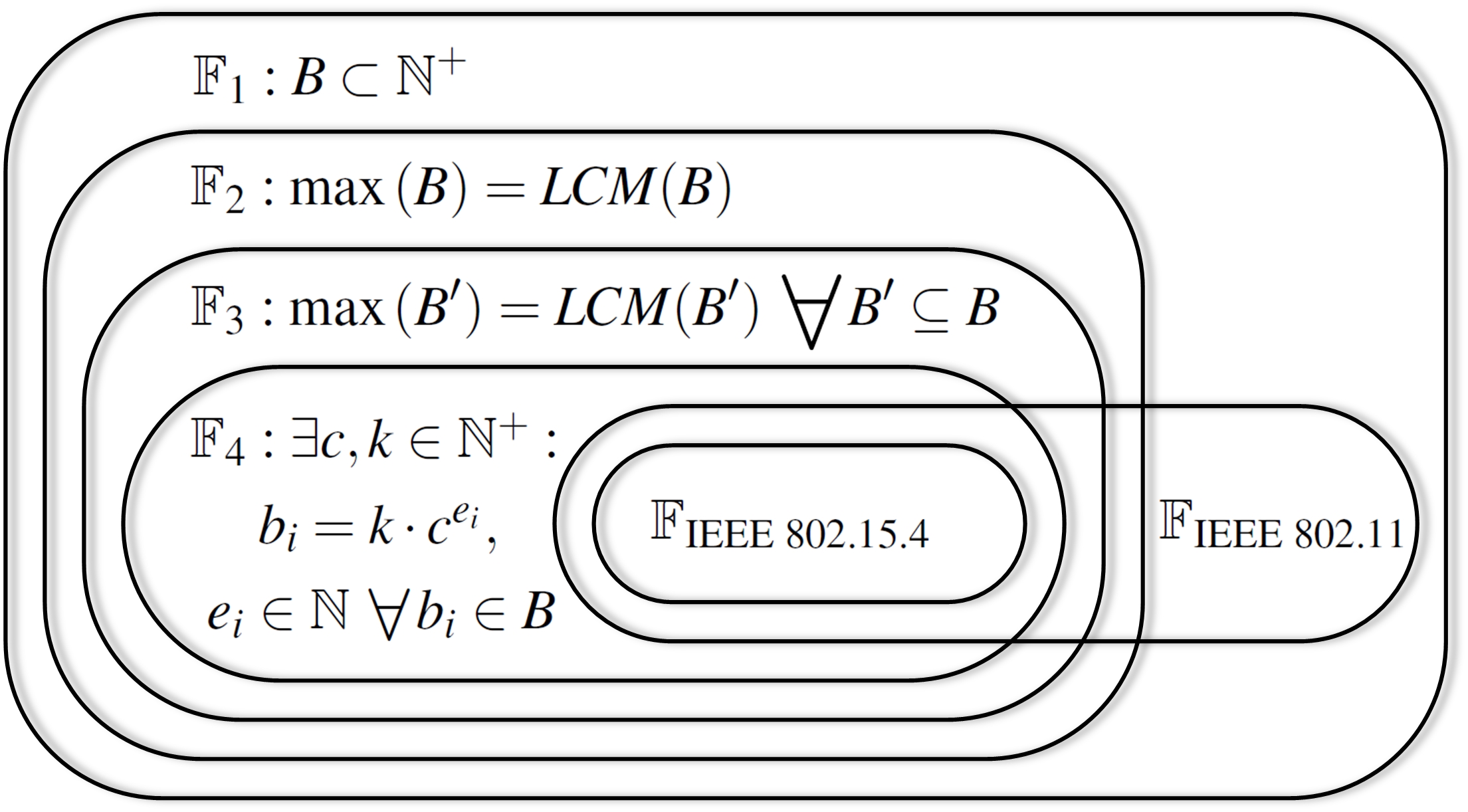}
}
\caption{Studied families of \ac{BI} sets.}
 \label{fig:BIFamilies}
\end{center}
\end{figure}

An important property of the family of \ac{BI} sets $\mathbb{F}_3$ is that in complete network environments using \ac{BI} sets from $\mathbb{F}_3$ there always exists a recursive schedule, as shown by the following proposition.

\begin{restatable}{proposition}{recursiveF3}
\label{prop:recursive_F3}
In a complete environment with a \ac{BI} set $B \in \mathbb{F}_3$, and a set of channels $C$, a recursive schedule always exists.
\end{restatable}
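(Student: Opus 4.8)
The plan is to reduce recursiveness to a pure channel-assignment question and then build the assignment by a single-pass greedy procedure, using the divisibility chain of $\mathbb{F}_3$ to guarantee the greedy never gets stuck.

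First I would reformulate. Since a device listens on at most one channel per slot and a recursive schedule leaves no slot idle on $[1,\max(B)\lvert C\rvert]$, such a schedule is nothing but a function $c$ assigning to every slot $t\in\{1,\dots,\max(B)\lvert C\rvert\}$ a single channel $c(t)\in C$. By Proposition~\ref{prop:recursive_characterization} (together with Definition~\ref{def:recursive_schedule}), $c$ is recursive precisely when, for every $b\in B$, any two slots $t,t'\in[1,b\lvert C\rvert]$ sharing the same $b$-offset (i.e. $t\equiv t'\pmod b$) are scanned on different channels; equivalently, the first $b\lvert C\rvert$ scans realise each of the $b\lvert C\rvert$ configurations of \ac{BI} $b$ exactly once. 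So it suffices to construct one such $c$.

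I would construct $c$ greedily, assigning slots in increasing order $t=1,2,\dots$ and, at slot $t$, choosing any channel not yet used by an earlier slot $t'<t$ that \emph{conflicts} with $t$, where $t'$ conflicts with $t$ if some $b\in B$ with $t\le b\lvert C\rvert$ satisfies $t\equiv t'\pmod b$. If this procedure never exhausts the channels, then by construction no two same-$b$-offset slots inside $[1,b\lvert C\rvert]$ ever collide, for any $b\in B$ (each such pair is resolved when its later member is assigned, at which point $b$ is still within its window), so the resulting $c$ is recursive; moreover every slot is assigned, so there are no idle slots and the makespan is exactly $\max(B)\lvert C\rvert$.

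The crux, and the step I expect to be the main obstacle, is showing the forbidden set at each slot has size $<\lvert C\rvert$, and here the $\mathbb{F}_3$ property is decisive. Fix $t$ and let $b_\ell$ be the smallest \ac{BI} whose window still contains $t$, that is, the smallest $b\in B$ with $b\lvert C\rvert\ge t$ (one exists since $\max(B)\lvert C\rvert\ge t$). Every relevant \ac{BI} $b$ with $b\lvert C\rvert\ge t$ satisfies $b\ge b_\ell$ and is therefore an integer multiple of $b_\ell$ by the divisibility chain defining $\mathbb{F}_3$, so $b\mid(t-t')$ forces $b_\ell\mid(t-t')$. Consequently a slot $t'<t$ conflicts with $t$ \emph{if and only if} $b_\ell\mid(t-t')$, i.e. the conflicting slots are exactly $t-b_\ell,\,t-2b_\ell,\dots$, of which there are $\lfloor (t-1)/b_\ell\rfloor\le\lvert C\rvert-1$ in $[1,t)$ because $t\le b_\ell\lvert C\rvert$. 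Thus at most $\lvert C\rvert-1$ channels are forbidden and a valid choice always exists, so the greedy procedure terminates successfully. The point to get right is precisely this collapse: divisibility folds the union of all per-\ac{BI} constraints into the single coarsest ($b_\ell$-)offset class, whereas without it the union over several \acp{BI} could in principle block all $\lvert C\rvert$ channels. Finally, Corollary~\ref{cor:recursive_optimal} certifies that the schedule so produced is complete, makespan-optimal and \ac{MDT}-optimal, which completes the argument.
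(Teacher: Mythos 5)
Your proof is correct, and it takes a genuinely different route from the paper's. The paper argues by induction over the \acp{BI} sorted increasingly: it first packs the $b_1\lvert C\rvert$ configurations of the smallest \ac{BI} into the window $\left[1,\,b_1\lvert C\rvert\right]$, then invokes the nesting property of $\mathbb{F}_3$ --- a scan discovering $\left(c,b_i,\delta_{b_i}(t)\right)$ simultaneously discovers $\left(c,b_j,\delta_{b_j}(t)\right)$ for all $j>i$ --- to conclude, \ac{BI} by \ac{BI}, that the configurations of $b_i$ not yet covered can be placed one per slot in the extension window $\left[b_{i-1}\lvert C\rvert+1,\,b_i\lvert C\rvert\right]$, justifying that last step only by analogy with the base case. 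You instead run a single-pass greedy channel assignment over time slots and prove it can never get stuck: your key lemma is that at any slot $t$ all active congruence constraints collapse to the coarsest one, namely divisibility by the smallest \ac{BI} $b_\ell$ whose window still contains $t$, so the conflicting earlier slots are exactly $t-b_\ell, t-2b_\ell,\ldots$ and forbid at most $\lfloor (t-1)/b_\ell\rfloor\leq\lvert C\rvert-1$ channels. Both arguments use the $\mathbb{F}_3$ divisibility hypothesis in the same role, but your decomposition is slot-by-slot rather than layer-by-layer, and your counting lemma rigorously fills in exactly the step the paper dismisses with ``analogously''; it also makes transparent why the statement fails outside $\mathbb{F}_3$, where the union of several residue classes can block all $\lvert C\rvert$ channels --- precisely the phenomenon in the paper's counterexample of Figure~\ref{fig:NonStructuredSchedule}. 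What the paper's induction buys is brevity and a structure that mirrors the later analysis of \ALG{} schedules; what yours buys is an explicit algorithm together with a self-contained feasibility proof.
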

\begin{proof}
To prove the claim we will use the characteristic property of recursive schedules from Proposition~\ref{prop:recursive_characterization}. Assume w.l.o.g. $B=\left\{b_1,\ldots,b_n\right\}$ with $b_i<b_j$ for $i<j$. Observe that it is possible to discover one configuration $\left(c,b_1,\delta_{b_1}(t)\right)$ in each time slot $t\in\left[1,\,b_1\cdot\lvert C\rvert\right]$, since for each offset $\delta\in\left\{1,\ldots,b_1\right\}$ there are $\lvert C\rvert$ time slots in the interval $\left[1,\,b_1\cdot\lvert C\rvert\right]$ where this offset occurs, and since the sets of time slots for individual offsets to not intersect.

Further, observe that in $\mathbb{F}_3$, each discovery of a configuration $\left(c,b_i,\delta_{b_i}(t)\right)$ results in a discovery of $\left(c,b_j,\delta_{b_j}(t)\right)$ for each $j\in\{i+1,\ldots,n\}$. Consequently, by induction, for each $b_i\in B$, one configuration $\left(c,b_i,\delta\right)$ can be discovered in each time slot $t\in\left[1,\,b_{i-1}\cdot\lvert C\rvert\right]$, while, analogously to the argumentation above, one of the remaining configurations can be discovered in each of the time slots $t\in\left[b_{i-1}\cdot\lvert C\rvert+1,\,b_i\cdot\lvert C\rvert\right]$.
\end{proof}

In the following, w.l.o.g., we only consider \ac{BI} sets whose \ac{GCD} is 1. Indeed, please observe that a listening schedule for a \ac{BI} set with \ac{GCD} $d\neq 1$ is equivalent to a listening schedule for the transformed \ac{BI} set where each element is divided by $d$, and $\tau$ is substituted by $\tau'=\tau\cdot d$. This transformation allows to reduce the computational complexity, especially for \ac{ILP}-based approaches.

\chapter{Preliminaries}
\label{sec:preliminaries}

A desired goal for a discovery strategy is to minimize the \acf{MDT}, where mean is taken over all networks in a particular environment. Since, however, a device does not a priori know which network configurations are present in its environment, this problem is not solvable. Nevertheless, it is possible to minimize the expected value for \ac{MDT}, called \acf{EMDT} in the following, given the assumption that certain network configurations are picked by individual networks with certain probabilities. In the following, we assume uniform probabilities for channels, \acp{BI}, and \ac{BI} offsets. In the next two sections we establish a formula for \ac{EMDT} under the assumption of a uniform distribution. We then proceed to establishing an upper bound on the makespan of \ac{EMDT}-optimal schedules, which will help us to evaluate performance of discovery algorithms proposed in Chapter~\ref{sec:strategies}.

\section{\acf{EMDT}}
\label{sec:emdt}

Since a device does not a priori know which network configurations are used by its neighbors, it is not possible to design a listening schedule that minimizes \ac{MDT}. However, in the absence of this information, a reasonable assumption for a device performing a discovery is that of a uniform distribution of configurations. It can then follow a discovery strategy that allows to minimize the expected value for \ac{MDT}, the \ac{EMDT}. Note that while mean value and expected value are synonyms, to improve readability, we use the term \acf{MDT} to refer to the mean value taken over neighbor networks in a particular environment (see Definition~\ref{def:mdt}), while we use the term \acf{EMDT} to refer to the expected value of \acp{MDT} taken over instances of network environments. 

To be more precise, we assume probabilities $P=\left(p_\kappa,\kappa\in K_{BC}\right)$ that a particular combination of channel, \ac{BI}, and offset are used by a neighbor network, where $p_\kappa$ are defined as follows. We assume that each channel and each \ac{BI} have equal probability to be selected by a neighboring network, and that all offsets feasible for a particular \ac{BI} have equal probability to be selected by a network using this \ac{BI}. Thus, a network configuration $\kappa\in K_{BC}$, $\kappa=\left(c_\kappa,b_\kappa,\delta_\kappa\right)$, has probability $p_\kappa=\frac{1}{b_\kappa\left|B\right|\left|C\right|}$ to be used by a neighbor network $\nu\in N$.

In the following proposition, we compute \ac{EMDT} and show that for uniform probabilities, \ac{EMDT} equals \ac{MDT} given a complete environment. 

\begin{restatable}{proposition}{mdtprop}
\label{prop:mdt}
For a set of networks $N$, a set of \acp{BI} $B\in\mathbb{F}_1$, a set of channels $C$, a complete listening schedule $\mathcal{L}$, and probabilities $p_\kappa=\frac{1}{b_\kappa\left|B\right|\left|C\right|}$ that a network configuration $\kappa\in K_{BC}$ is used by a network $\nu\in N$, \ac{EMDT} is given by
\begin{equation*}
E\left[\frac{1}{\left|N\right|}\sum_{\nu\in N}T_\nu\left(\mathcal{L}\right)\right]=\frac{1}{\left|B\right|\left|C\right|}\sum_{\kappa\in K_{BC}}\frac{1}{b_\kappa}T_\kappa\left(\mathcal{L}\right)\,.
\end{equation*}
Note that $\ac{EMDT}$ equals $\ac{MDT}$ for a complete environment
\end{restatable}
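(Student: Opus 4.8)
The plan is to compute the expectation by \emph{linearity}, which avoids any assumption on the joint law of the networks' configurations and uses only their marginals. Since $N$ is finite I can interchange expectation and summation,
\[
E\!\left[\frac{1}{|N|}\sum_{\nu\in N}T_\nu(\mathcal{L})\right]=\frac{1}{|N|}\sum_{\nu\in N}E\!\left[T_\nu(\mathcal{L})\right],
\]
so the whole computation reduces to the single-network expectation $E[T_\nu(\mathcal{L})]$.

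First I would condition on the configuration drawn by $\nu$. If $\nu$ is assigned $\kappa$, then $c_\nu=c_\kappa$ and $\mathcal{T}_\nu=\mathcal{T}_\kappa$ by definition, so $T_\nu(\mathcal{L})=T_\kappa(\mathcal{L})$; this value is finite exactly because $\mathcal{L}$ is complete, i.e. $\mathcal{L}\cap\mathcal{B}_\kappa\neq\emptyset$, so the minimum defining $T_\kappa$ exists. With the prescribed marginal $\Pr(\nu\text{ draws }\kappa)=p_\kappa$ this gives
\[
E[T_\nu(\mathcal{L})]=\sum_{\kappa\in K_{BC}}p_\kappa\,T_\kappa(\mathcal{L}),
\]
which is identical for every $\nu$. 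Feeding this back collapses the outer average over $N$, and substituting $p_\kappa=\frac{1}{b_\kappa|B||C|}$ produces
\[
\frac{1}{|N|}\sum_{\nu\in N}E[T_\nu(\mathcal{L})]=\sum_{\kappa\in K_{BC}}p_\kappa\,T_\kappa(\mathcal{L})=\frac{1}{|B||C|}\sum_{\kappa\in K_{BC}}\frac{1}{b_\kappa}T_\kappa(\mathcal{L}),
\]
which is the asserted formula. No structural property of $B$ enters, so it holds for every $B\in\mathbb{F}_1$.

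For the closing remark that \ac{EMDT} coincides with \ac{MDT} in a complete environment, I would specialize the formula just derived and compare it term by term with the complete-environment expression for \ac{MDT} in Definition~\ref{def:mdt}. In a complete environment $E$ is a bijection onto $K_{BC}$, so each discovery time $T_\kappa(\mathcal{L})$ is attained by exactly one network and the deterministic \ac{MDT} is a plain per-configuration sum over $K_{BC}$. The step I expect to be the real obstacle --- indeed the only nonroutine point --- is checking that the per-configuration weights of the two averages actually agree: the \ac{EMDT} expression carries the factor $1/b_\kappa$ inherited from the uniform-over-offsets assumption, whereas the per-network average weights every configuration equally. I would therefore pin down the exact normalization of $p_\kappa$ together with the count $|K_{BC}|=|C|\sum_{b\in B}b$ and verify this bookkeeping carefully before asserting the two averages are equal.
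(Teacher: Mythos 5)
Your proof of the displayed formula is correct and takes a genuinely different route from the paper's. The paper decomposes by configurations rather than by networks: it fixes $\lvert N\rvert$, introduces the counts $\lvert N_\kappa\rvert$ of networks using configuration $\kappa$, asserts that $\lvert N_\kappa\rvert$ is binomially distributed with parameter $p_\kappa$, and evaluates the binomial sum by hand to obtain $E\left[\lvert N_\kappa\rvert\right]/\lvert N\rvert=p_\kappa$, finally noting that the answer is independent of $\lvert N\rvert$ so that assumption can be dropped. Your per-network argument (linearity of expectation plus conditioning on the configuration drawn by $\nu$) reaches the same identity in two lines and is strictly more economical and more general: you use only the marginal law of each network's configuration, whereas the binomial step in the paper implicitly assumes that networks choose their configurations independently of one another. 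The only thing the paper's longer computation buys is the explicit observation that the result does not depend on $\lvert N\rvert$, so the fixed-$\lvert N\rvert$ assumption can be relaxed; your argument has the same feature (condition on $\lvert N\rvert$ if it is random), though you leave it implicit.

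Your caution about the closing remark is well founded, and carrying out the bookkeeping you describe shows that the remark is false as literally stated, not merely in need of verification. The formula you proved weights configuration $\kappa$ by $p_\kappa=\frac{1}{b_\kappa\lvert B\rvert\lvert C\rvert}$, while the MDT of a complete environment (Definition~\ref{def:mdt}) weights every configuration equally by $1/\lvert K_{BC}\rvert$ with $\lvert K_{BC}\rvert=\lvert C\rvert\sum_{b\in B}b$ (incidentally, the third expression in that definition carries a spurious factor $\lvert B\rvert$). These weightings coincide only if $b\lvert B\rvert=\sum_{b'\in B}b'$ for every $b\in B$, i.e.\ only if $\lvert B\rvert=1$. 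A concrete counterexample: for $B=\{1,2\}$, $C=\{c\}$ and $\mathcal{L}=\{(c,1),(c,2)\}$ the discovery times of the three configurations are $1,1,2$, so EMDT $=\frac{1}{2}\cdot 1+\frac{1}{4}\cdot 1+\frac{1}{4}\cdot 2=\frac{5}{4}$, while MDT $=\frac{1+1+2}{3}=\frac{4}{3}$. Note that the paper's own proof never addresses this remark either (it stops once the displayed identity is established), so you are not missing an argument that exists in the paper; the remark would only hold under a uniform-over-configurations model $p_\kappa=1/\lvert K_{BC}\rvert$, which contradicts the probability model assumed in the proposition.
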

\begin{proof}
First, assume that the number of networks $\left|N\right|$ is fixed. 
Let $N_\kappa\subseteq N$ be the subset of networks using configuration $\kappa\in K_{BC}$. Observe that
\begin{align*}
E\left[\frac{1}{\left|N\right|}\sum_{\nu\in N}T_\nu\left(\mathcal{L}\right)\right]=
E\left[\sum_{\kappa\in K_{BC}}\frac{\left|N_\kappa\right|}{\left|N\right|}T_\kappa\left(\mathcal{L}\right)\right]=
\sum_{\kappa\in K_{BC}}\frac{E\left[\left|N_\kappa\right|\right]}{\left|N\right|}T_\kappa\left(\mathcal{L}\right)\,.
\end{align*}
In order to calculate $E\left[\left|N_\kappa\right|\right]$, we have to calculate $P\left[\left|N_\kappa\right|=n\right]$. Since we assumed $p_\kappa=\frac{1}{b_\kappa\left|B\right|\left|C\right|}$, we have
\begin{equation*}
P\left[\left|N_\kappa\right|=n\right]=\binom{\left|N\right|}{n}\frac{1}{\left(b_\kappa\left|B\right|\left|C\right|\right)^n}\cdot\left(1-\frac{1}{b_\kappa\left|B\right|\left|C\right|}\right)^{\left|N\right|-n}=
\binom{\left|N\right|}{n}\frac{\left(b_\kappa\left|B\right|\left|C\right|-1\right)^{\left|N\right|-n}}{\left(b_\kappa\left|B\right|\left|C\right|\right)^{\left|N\right|}}\,.
\end{equation*}
Consequently,

\begin{align*}
\frac{E\left[\left|N_\kappa\right|\right]}{\left|N\right|}&=
\frac{1}{\left|N\right|}\sum_{n=0}^{\left|N\right|}{n\cdot P\left[\left|N_\kappa\right|=n\right]}\\
&=\frac{1}{\left|N\right|}\sum_{n=0}^{\left|N\right|}{n\binom{\left|N\right|}{n}\frac{\left(b_\kappa\left|B\right|\left|C\right|-1\right)^{\left|N\right|-n}}{\left(b_\kappa\left|B\right|\left|C\right|\right)^{\left|N\right|}}}\\
&=\frac{1}{\left|N\right|}\left(\frac{b_\kappa\left|B\right|\left|C\right|-1}{b_\kappa\left|B\right|\left|C\right|}\right)^{\left|N\right|}\sum_{n=0}^{\left|N\right|}{n\binom{\left|N\right|}{n}\frac{1}{\left(b_\kappa\left|B\right|\left|C\right|-1\right)^n}}\\
&=\frac{1}{\left|N\right|}\left(\frac{b_\kappa\left|B\right|\left|C\right|-1}{b_\kappa\left|B\right|\left|C\right|}\right)^{\left|N\right|}\left|N\right|\frac{\left(b_\kappa\left|B\right|\left|C\right|\right)^{\left|N\right|-1}}{\left(b_\kappa\left|B\right|\left|C\right|-1\right)^{\left|N\right|}}\\
&=\frac{1}{b_\kappa\left|B\right|\left|C\right|}\,.
\end{align*}
Since the resulting expression does not depend on $\left|N\right|$, the assumption of fixed $\left|N\right|$ can be relaxed, and thus, the claim is proved.
\end{proof}

In the following section, we will use this proposition to study the number of time slots required to minimize \ac{EMDT}.

\section{Upper Bound on Makespan of \ac{EMDT}-Optimal Schedules}

In this section, we will show that for an arbitrary set of \acp{BI}, \ac{EMDT} can be minimized within $LCM(B)\cdot\left|C\right|$ time slots, and that, consequently, for \ac{BI} sets in $\mathbb{F}_2$, \ac{EMDT} can be minimized within $\max(B)\cdot\lvert C\rvert$ time slots, implying that in $\mathbb{F}_2$, \ac{EMDT}-optimal schedules are also makespan-optimal. The following proposition establishes an upper bound on the number of time slots required to minimize an arbitrary strictly increasing function of discovery times. The idea for the proof is to show that any schedule that results in a network being detected after time slot $LCM(B)\cdot\lvert C\rvert$ can be modified such that the network in question is detected before time slot $LCM(B)\cdot\lvert C\rvert$, without increasing the discovery times of other networks.

\begin{restatable}{proposition}{lcmBC}
\label{prop:lcmBC}
For an arbitrary set of \acp{BI} $B\in\mathbb{F}_1$, a set of channels $C$, and a function $f:\mathbb{N}^{\lvert K_{BC}\rvert}\rightarrow\mathbb{R}$, which is strictly increasing in each argument, complete schedules $\mathcal{L}^*$ that minimize $f\left(\left(T_\kappa\left(\mathcal{L}\right)\right)_{\kappa\in K_{BC}}\right)$ have a makespan $T_{\mathcal{L}^*}\leq LCM(B)\cdot\lvert C\rvert$.
\end{restatable}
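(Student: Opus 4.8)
The plan is to argue by contradiction with an exchange (swap) argument, following the hint in the surrounding text. Suppose $\mathcal{L}^*$ is a complete schedule that minimizes $f$ but whose makespan satisfies $T_{\mathcal{L}^*} > LCM(B)\cdot\lvert C\rvert$. Then some configuration $\kappa_0=\left(c_0,b_0,\delta_0\right)$ has discovery time $t_0 := T_{\kappa_0}\left(\mathcal{L}^*\right) > LCM(B)\cdot\lvert C\rvert$. I will construct a complete schedule $\mathcal{L}'$ with $T_\kappa\left(\mathcal{L}'\right)\leq T_\kappa\left(\mathcal{L}^*\right)$ for every $\kappa\in K_{BC}$ and $T_{\kappa_0}\left(\mathcal{L}'\right) < t_0$. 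Since $f$ is strictly increasing in each argument, this yields $f\left(\mathcal{L}'\right) < f\left(\mathcal{L}^*\right)$, contradicting the optimality of $\mathcal{L}^*$ and forcing the claimed bound.

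The engine of the construction is the periodicity $\delta(t)=\delta\left(t+LCM(B)\right)$ established earlier: whenever $t\equiv t'\pmod{LCM(B)}$, a scan on a fixed channel at $t$ and at $t'$ discovers exactly the same set of configurations $\left\{\left(c,b,\delta_b(t)\right)\,:\,b\in B\right\}$. First I collect the \emph{candidate} slots, namely those $s\in\left[1,\,LCM(B)\cdot\lvert C\rvert\right]$ with $s\equiv t_0\pmod{LCM(B)}$. There are exactly $\lvert C\rvert$ of them, one in each length-$LCM(B)$ block of the interval, and each is strictly less than $t_0$; on each, a scan on channel $c_0$ would discover $\kappa_0$. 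Moreover, none of these candidate slots can already carry a $c_0$-scan in $\mathcal{L}^*$, since such a scan would discover $\kappa_0$ at a slot $\leq LCM(B)\cdot\lvert C\rvert < t_0$, contradicting $T_{\kappa_0}=t_0$.

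The heart of the argument is to free up one candidate slot for a $c_0$-scan without delaying any configuration, which splits into two cases. If some candidate slot $s$ is idle, I simply add $\left(c_0,s\right)$: this discovers $\kappa_0$ at $s<t_0$ and, being a pure addition, cannot increase any $T_\kappa$. Otherwise all $\lvert C\rvert$ candidate slots carry scans on channels drawn from the $\lvert C\rvert-1$ channels in $C\setminus\left\{c_0\right\}$, so by the pigeonhole principle two candidate slots $s<s'$ carry scans on the same channel $c_1$. Because $\delta(s)=\delta(s')$, the two scans discover identical configuration sets, so the later scan $\left(c_1,s'\right)$ is redundant: every configuration it discovers is already discovered by $\left(c_1,s\right)$ at the earlier time $s$. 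I remove $\left(c_1,s'\right)$ and place $\left(c_0,s'\right)$ on the freed slot, discovering $\kappa_0$ at $s'\leq LCM(B)\cdot\lvert C\rvert < t_0$. In both cases $\mathcal{L}'$ is a legal listening schedule (at most one channel per slot), stays complete, strictly lowers $T_{\kappa_0}$, and raises no other discovery time, giving the contradiction.

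The main obstacle, and the one step that needs genuine care rather than bookkeeping, is the redundancy claim in the second case: I must verify that deleting the superfluous later scan $\left(c_1,s'\right)$ changes \emph{no} discovery time. This reduces to showing that every configuration whose discovery time could be affected, i.e. each $\left(c_1,b,\delta_b(s')\right)$ with $b\in B$, is in fact discovered strictly before $s'$, which follows from $\delta(s)=\delta(s')$ together with $s<s'$ and $\left(c_1,s\right)\in\mathcal{L}^*$. The counting of the $\lvert C\rvert$ candidate slots and the observation that all modifications stay within $\left[1,\,LCM(B)\cdot\lvert C\rvert\right]$ are then routine consequences of the periodicity of $\delta(\cdot)$.
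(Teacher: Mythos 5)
Your proposal is correct and follows essentially the same argument as the paper: both identify the $\lvert C\rvert$ slots congruent to the late discovery time modulo $LCM(B)$, split into the idle-slot case and the pigeonhole case with a redundant repeated scan on some $c'\neq c_0$, and use the periodicity $\delta(t)=\delta(t+LCM(B))$ to justify the exchange without increasing any discovery time. The only cosmetic differences are that you add or swap scans while keeping the late scan (the paper removes it) and you stop after one exchange rather than iterating over all late configurations; neither affects the validity of the contradiction.
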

\begin{proof}
Assume schedule $\mathcal{L}$ minimizes $f$ and $T_{\mathcal{L}}>LCM(B)\cdot\lvert C\rvert$. Consequently, there is at least one configuration $\kappa=\left(c,b,\delta\right)$ with discovery time $T_{\kappa}\left(\mathcal{L}\right)=T_{\mathcal{L}}>LCM(B)\cdot\lvert C\rvert$. Consider time slots $\tilde{\mathcal{T}}_\kappa=\left\{\delta+i\cdot LCM(B)\right\}_{i\in\left\{0,\ldots,\lvert C\rvert-1\right\}}$. Observe that $\left(\left\{c\right\}\times\tilde{\mathcal{T}}_\kappa\right)\cap\mathcal{L}=\emptyset$ since otherwise $\kappa$ would have been detected during one of the time slots in $\tilde{\mathcal{T}}_\kappa$. Consequently, there either exists an idle time slot $\tilde{t}\in\tilde{\mathcal{T}}_\kappa$, or, since $\lvert\tilde{\mathcal{T}}_\kappa\rvert=\lvert C\rvert$, there exist time slots $t',t''\in\tilde{\mathcal{T}}_\kappa$ and a channel $c'\neq c$ such that $\left(c',t'\right),\left(c',t''\right)\in\tilde{\mathcal{T}}_\kappa$.

In the first case, we construct a new schedule $\mathcal{L}'=\mathcal{L}\setminus\left\{\left(c,\,T_{\mathcal{L}}\right)\right\}\cup\left\{\left(c,\,\tilde{t}\right)\right\}$, such that $\kappa$ is detected during $\tilde{t}$ and none of the discovery times of other network configurations are increased.

In the second case, we construct a new schedule $\mathcal{L}'=\mathcal{L}\setminus\left\{\left(c,\,T_{\mathcal{L}}\right),\left(c',\,t''\right)\right\}\cup\left\{\left(c,\,t''\right)\right\}$. With the new schedule, configuration $\kappa$ is detected during time slot $t''$. In order to show that the discovery times of other networks do not increase, consider the function $\delta(t)$, defined in Chapter~\ref{sec:system}, providing for each time slot $t$ a vector of offsets that can be detected in $t$. Since periodicity of $\delta(t)$ is $LCM(B)$, we conclude that $\delta(t')=\delta(t'')$, and thus no discoveries are performed during time slot $t''$ with the schedule $\mathcal{L}$. Consequently, none of the discovery times are increased with the new schedule.

Repeating the above procedure for each $\kappa$ with discovery time $T_{\kappa}\left(\mathcal{L}\right)>LCM(B)\cdot\lvert C\rvert$ results in a schedule $\mathcal{L}^*$ with makespan $T_{\mathcal{L}^*}\leq LCM(B)\cdot\lvert C\rvert$ with $f\left(\left(T_\kappa\left(\mathcal{L}^*\right)\right)_{\kappa\in K_{BC}}\right)<f\left(\left(T_\kappa\left(\mathcal{L}\right)\right)_{\kappa\in K_{BC}}\right)$, proving the claim.
\end{proof}

The following Corollary presents a notable consequence from Proposition~\ref{prop:lcmBC} for \ac{BI} sets from $\mathbb{F}_2$.

\begin{restatable}{corollary}{lcmBCcor1}
\label{cor:lcmBC}
For a \ac{BI} set $B\in\mathbb{F}_2$, a set of channels $C$, and a function $f:\mathbb{N}^{\lvert K_{BC}\rvert}\rightarrow\mathbb{R}$, which is strictly increasing in each argument, complete schedules $\mathcal{L}^*$ that minimize \newline $f\left(\left(T_\kappa\left(\mathcal{L}\right)\right)_{\kappa\in K_{BC}}\right)$ are makespan-optimal.
\end{restatable}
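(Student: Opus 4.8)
The plan is to combine the general upper bound from Proposition~\ref{prop:lcmBC} with the defining property of $\mathbb{F}_2$ and the universal lower bound from Corollary~\ref{cor:maxBC}. The heavy lifting has already been carried out in Proposition~\ref{prop:lcmBC}; this corollary is essentially a specialization of that result to the family $\mathbb{F}_2$, where the two bounds are forced to coincide.

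First I would invoke Proposition~\ref{prop:lcmBC}. Since $f$ is strictly increasing in each argument and $B\in\mathbb{F}_2\subset\mathbb{F}_1$, the hypotheses of that proposition are met, so any complete schedule $\mathcal{L}^*$ minimizing $f\left(\left(T_\kappa\left(\mathcal{L}\right)\right)_{\kappa\in K_{BC}}\right)$ satisfies $T_{\mathcal{L}^*}\leq LCM(B)\cdot\lvert C\rvert$. Next I would use the defining property of $\mathbb{F}_2$, namely $\max(B)=LCM(B)$, to rewrite this bound as $T_{\mathcal{L}^*}\leq\max(B)\cdot\lvert C\rvert$.

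On the other side, Corollary~\ref{cor:maxBC} supplies the matching lower bound $T_{\mathcal{L}^*}\geq\max(B)\cdot\lvert C\rvert$, valid for every complete schedule regardless of the objective. Combining the two inequalities yields $T_{\mathcal{L}^*}=\max(B)\cdot\lvert C\rvert$, which is precisely the definition of a makespan-optimal schedule, establishing the claim.

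Since each step is a direct appeal either to a previously established result or to the definition of $\mathbb{F}_2$, there is no genuine obstacle here. The only point that warrants a line of care is verifying that the hypotheses of Proposition~\ref{prop:lcmBC} transfer correctly, namely strict monotonicity of $f$ (assumed in the corollary) and $B\in\mathbb{F}_1$ (which follows from the inclusion $\mathbb{F}_2\subset\mathbb{F}_1$). The conceptual content lies entirely in Proposition~\ref{prop:lcmBC}; this corollary merely observes that on $\mathbb{F}_2$ its upper bound collapses onto the unconditional lower bound of Corollary~\ref{cor:maxBC}.
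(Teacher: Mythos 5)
Your proof is correct and follows exactly the paper's own argument: the paper likewise cites Proposition~\ref{prop:lcmBC}, Corollary~\ref{cor:maxBC}, and the defining property $LCM(B)=\max(B)$ of $\mathbb{F}_2$, which you have merely spelled out by combining the resulting upper and lower bounds into the equality $T_{\mathcal{L}^*}=\max(B)\cdot\lvert C\rvert$.
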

\begin{proof}
The claim follows directly from Corollary~\ref{cor:maxBC}, Proposition~\ref{prop:lcmBC}, and the defining property of $B\in\mathbb{F}_2$ that $LCM(B)=\max(B)$.
\end{proof}

Please observe that Proposition~\ref{prop:mdt} implies that the upper bounds established in Proposition~\ref{prop:lcmBC} and Corollary~\ref{cor:lcmBC} also apply to schedules minimizing \ac{EMDT}.

\begin{restatable}{corollary}{lcmBCcor1mdt}
\label{cor:mdtcor1}
For an arbitrary set of \acp{BI} $B\in\mathbb{F}_1$, and a set of channels $C$, \ac{EMDT}-optimal listening schedules $\mathcal{L}^*$ have a makespan $T_{\mathcal{L}^*}\leq LCM(B)\cdot\lvert C\rvert$.
\end{restatable}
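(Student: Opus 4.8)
The plan is to recognize \ac{EMDT} as a particular instance of the function class already covered by Proposition~\ref{prop:lcmBC}, so that the corollary follows by direct specialization rather than by any new argument. This is precisely the content flagged in the remark following Corollary~\ref{cor:lcmBC}, so the task is to make that observation rigorous.

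First I would recall the closed form for \ac{EMDT} established in Proposition~\ref{prop:mdt}: under the uniform probabilities $p_\kappa=\frac{1}{b_\kappa\lvert B\rvert\lvert C\rvert}$ we have
\begin{equation*}
E\left[\frac{1}{\lvert N\rvert}\sum_{\nu\in N}T_\nu\left(\mathcal{L}\right)\right]=\frac{1}{\lvert B\rvert\lvert C\rvert}\sum_{\kappa\in K_{BC}}\frac{1}{b_\kappa}T_\kappa\left(\mathcal{L}\right)\,.
\end{equation*}
I would then define $f\colon\mathbb{N}^{\lvert K_{BC}\rvert}\rightarrow\mathbb{R}$ by $f\bigl((x_\kappa)_{\kappa\in K_{BC}}\bigr)=\frac{1}{\lvert B\rvert\lvert C\rvert}\sum_{\kappa\in K_{BC}}\frac{1}{b_\kappa}x_\kappa$, so that the \ac{EMDT} of $\mathcal{L}$ equals $f\bigl((T_\kappa(\mathcal{L}))_{\kappa\in K_{BC}}\bigr)$. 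Consequently, minimizing \ac{EMDT} over complete schedules is exactly minimizing $f\bigl((T_\kappa(\mathcal{L}))_{\kappa\in K_{BC}}\bigr)$, so the \ac{EMDT}-optimal schedules $\mathcal{L}^*$ are precisely the minimizers of $f$.

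Second, I would verify that $f$ satisfies the hypothesis of Proposition~\ref{prop:lcmBC}, namely strict monotonicity in each argument. Since $B\subset\mathbb{N}^+$ is a finite set of positive integers, every $b_\kappa\geq 1$, and hence each coefficient $\frac{1}{b_\kappa\lvert B\rvert\lvert C\rvert}$ is strictly positive. Increasing any single coordinate $x_\kappa$ therefore strictly increases $f$, which is exactly the required strict monotonicity. With this in hand, applying Proposition~\ref{prop:lcmBC} to this $f$ yields $T_{\mathcal{L}^*}\leq LCM(B)\cdot\lvert C\rvert$ for every \ac{EMDT}-optimal schedule $\mathcal{L}^*$, which is the claim.

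There is essentially no obstacle here: the entire argument reduces to checking that \ac{EMDT} is a strictly-positively-weighted sum of the discovery times $T_\kappa$, and the only point needing a word is that all weights are positive because $b_\kappa\in\mathbb{N}^+$. The substantive work has already been done in Propositions~\ref{prop:mdt} and~\ref{prop:lcmBC}; the corollary is merely their composition.
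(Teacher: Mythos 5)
Your proof is correct and follows exactly the paper's own argument: the paper likewise invokes Proposition~\ref{prop:mdt} to express \ac{EMDT} as a weighted sum of the $T_\kappa$, notes that this expression is strictly increasing in each argument, and applies Proposition~\ref{prop:lcmBC}. Your version merely spells out the positivity of the weights $\frac{1}{b_\kappa\lvert B\rvert\lvert C\rvert}$, which the paper leaves implicit.
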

\begin{proof}
From Proposition~\ref{prop:mdt} we obtain an expression for \ac{EMDT}, which is strictly increasing in each configuration detection time $T_\kappa$. Applying Proposition~\ref{prop:lcmBC} we obtain the claim.
\end{proof}

\begin{restatable}{corollary}{lcmBCcor2}
\label{cor:mdtcor2}
For a set of \acp{BI} $B\in\mathbb{F}_2$, and a set of channels $C$, \ac{EMDT}-optimal listening schedules are also makespan-optimal.
\end{restatable}
\begin{proof}
From Proposition~\ref{prop:mdt} we obtain an expression for \ac{EMDT}, which is strictly increasing in each configuration detection time $T_\kappa$. Applying Corollary~\ref{cor:lcmBC} we obtain the claim.
\end{proof}

In the following Chapter we will propose efficient approaches to computing listening schedules that are both \ac{EMDT}-optimal and makespan-optimal.

\chapter{Discovery Strategies}
\label{sec:strategies}

In this section we present several novel discovery algorithms for multichannel environments, that are \acf{EMDT}-optimal and makespan-optimal for the family of \ac{BI} sets $\mathbb{F}_3$ (see Chapter~\ref{sec:BIFamilies} for a definition). Note that $\mathbb{F}_3$ contains a broad spectrum of \ac{BI} sets, completely including \ac{BI} sets defined by the IEEE~802.15.4 standard, and a large fraction of \ac{BI} sets defined by the IEEE~802.11 standard.
In addition, we develop an \ac{ILP}-based approach, denoted GENOPT, that computes \ac{EMDT}-optimal  discovery schedules for arbitrary \ac{BI} sets in $\mathbb{F}_1$.

Note that all presented strategies operate by passively listening to periodically transmitted beacon messages of neighbors that are agnostic to the discovery process. Moreover, due to their optimality for quite general sets of \acp{BI}, such as $\mathbb{F}_1$, $\mathbb{F}_2$, or $\mathbb{F}_3$, they can be deployed without conflicting with existing \ac{MAC} layer technologies, allowing network operators to select \ac{BI} sets suited best for the targeted application and/or device characteristics, such as, e.g., energy constraints. In particular, our results apply to \ac{BI} sets used by technologies such as IEEE~802.11 and~IEEE~802.15.4.

%


The rest of this chapter is structured as follows. In Section~\ref{sec:disc_154}, we briefly present discovery strategies specified by the IEEE~802.15.4 standard as well as our previous work on optimized IEEE~802.15.4 discovery. In Section~\ref{sec:greedy}, we define the family of algorithms \ALG{}, prove their optimality, and study their complexity. In Section~\ref{sec:chantrain}, we describe a strategy named CHAN TRAIN, which is a modification of the \ALG{} approach attempting at reducing the number of channel switches. An \ac{ILP}-based approach to computing \ac{EMDT}-optimal listening schedules for arbitrary \ac{BI} sets is presented in Section~\ref{sec:genopt}. Finally, in Section~\ref{sec:opt2} we describe a discovery strategy for the special case of \ac{BI} sets with two elements.

\section{Discovery Strategies for IEEE~802.15.4 Networks}
\label{sec:disc_154}

In the following we will describe discovery strategies for IEEE~802.15.4 networks. First, we will explain the different scanning techniques offered in the IEEE~802.15.4 standard and then we will give a brief overview of our previous work on optimized discovery strategies for IEEE~802.15.4 networks.

\subsection{PSV - Passive Scan in IEEE~802.15.4}
\label{subsec:psv}

The IEEE~802.15.4 standard supports \acp{BI} that are multiple of powers of 2. We denote the corresponding family of \ac{BI} sets by $\mathbb{F}_{IEEE~802.15.4}$ (see also Chapter~\ref{sec:BIFamilies}). The IEEE~802.15.4 standard defines four types of scanning techniques~\cite{ieee802154}. When performing an active or an orphan scan, devices transmit beacon requests or orphan notifications on each selected channel. With the passive scan and the energy detection scan, a device only listens on channels without any transmission. Frames are only decoded in the passive scan while the result of an energy scan is the peak energy per channel.

In our work we focus on passive discovery techniques and therefore compare our strategies to the passive scan of IEEE~802.15.4, denoted by PSV. PSV proceeds by sequentially listening on each channel $c\in C$ for $\max(B)$ time slots. Note that listening schedules generated by PSV are optimal w.r.t. the makespan and the number of channel switches for arbitrary \ac{BI} sets $B \in \mathbb{F}_1$. Still, they fail to minimize \ac{EMDT}, meaning that although the time when the last network is detected is minimized, the time of discovery of other network can be significantly worse than its optimal value, as shown in the following example. Consider a setting with $B=\{1,2\}$ and $\lvert C\rvert=2$. The schedule generated by PSV has an \ac{EMDT} of 2.25, while minimum \ac{EMDT} is 2 (scanning channel 1 during time slots 1 and 4, and channel 2 during time slots 2 and 3). The optimality gap further increases for larger scenarios.

\subsection{(SW)OPT - Previous Work on Optimized IEEE~802.15.4 Discovery Strategies}
\label{subsec:swopt}

In our previous work~\cite{Karowski11, Karowski13} we developed discovery strategies for IEEE~802.15.4 networks, the OPTimzed (OPT) and the SWitched OPTimized (SWOPT) strategy. Both strategies are based on solving \ac{ILP} minimizing the \ac{EMDT} and showed significant improvement as compared to PSV as well as the SWEEP strategy~\cite{willig10}. OPT and SWOPT use the same \ac{ILP}, however, SWOPT additionally performs the preprocessing on the \ac{BI} set described in Section~\ref{sec:BIFamilies} (dividing all \acp{BI} by their \ac{GCD}), resulting in listening schedules with fewer channel switches. Note that OPT and SWOPT generate recursive listening schedules. 

Even though (SW)OPT was developed for \ac{BI} sets from $\mathbb{F}_{\textrm{IEEE 802.15.4}}$, it can be adapted to support \ac{BI} sets from $\mathbb{F}_{3}$. However, due to a grouping of time slots in the formulation of the \ac{ILP} that allows to compute the discovery probability of individual network configurations, (SW)OPT cannot be extended to support \ac{BI} sets from $\mathbb{F}_{2}$.



\section{\ALG{} Discovery Algorithms}
\label{sec:greedy}

In the following, we show that for the family of \ac{BI} sets $\mathbb{F}_3$, schedules that are both \ac{EMDT}-optimal and makespan-optimal can be computed in a very efficient way by greedily maximizing the number of discovered configurations in each time slot. In particular, we will show that for $B\in\mathbb{F}_3$ the computational complexity required by a straightforward implementation is $\mathcal{O}\left(\left|C\right|^2\left|B\right|\max(B)\right)$, while memory consumption is $\mathcal{O}\left(\left|C\right|\left|B\right|\sum_{b\in B}b\right)$. We start by formally defining the family of algorithms \ALG{}.

\begin{restatable}[\ALG{}]{definition}{mindy}
\label{def:mindy}
The family of algorithms \ALG{} contains all algorithms that greedily maximize the number of detected configurations in each time slot. More precisely, for an arbitrary \ac{BI} set $B\in\mathbb{F}_1$ and channel set $C$, in every time slot $t$, an algorithm $A\in\ALG{}$ scans channel $c\in C$ that maximizes the number of configurations from $K_{BC}$ discovered in time slot $t$, given the configurations discovered in previous time slots $\left\{1,\ldots,t-1\right\}$.
\end{restatable}
In the following, we will say "\ALG{} listening schedule" to refer to a schedule generated by \ALG{} algorithm.

Note that for arbitrary \ac{BI} sets, a \ALG{} algorithm does not necessarily generate \ac{EMDT}-optimal or makespan-optimal schedules. Consider an example using $B = \{1,2,3,5\}\in\mathbb{F}_1\setminus\mathbb{F}_2$ and $|C| = 3$. Figure~\ref{fig:GREEDY_F1_non_makespan_MDT} shows a \ALG{} listening schedule, which is neither makespan-optimal nor \ac{EMDT}-optimal, as can be seen from a comparison with the listening schedule in Figure~\ref{fig:GREEDY_F1_non_makespan_MDT_GENOPT_comp}, which is both makespan-optimal and \ac{EMDT}-optimal. The number in each square represents the number of configurations that can be discovered by scanning a particular channel during a particular time slot. Gray squares represent channels scanned by the illustrated listening schedule. In particular, optimal value for the makespan in this example constitutes 15 time slots, while optimal value for \ac{EMDT} is $4.875$ time slots, in contrast to $5.125$ time slots achieved by the \ALG{} schedule. Another example using $B = \{2,3,4,6,12\} \in \mathbb{F}_2 \setminus \mathbb{F}_3$ and $|C| = 2$ is depicted in Figure~\ref{fig:GREEDY_non_makespan_MDT_opt_F2}. The \ALG{} listening schedule achieves a \ac{EMDT} of 6.3 time slots in comparison with the optimal value of 6.1 time slots.

\begin{figure*}
\begin{center}
		\subfloat[GREEDY]{		
        \includegraphics[width=1.0\textwidth]{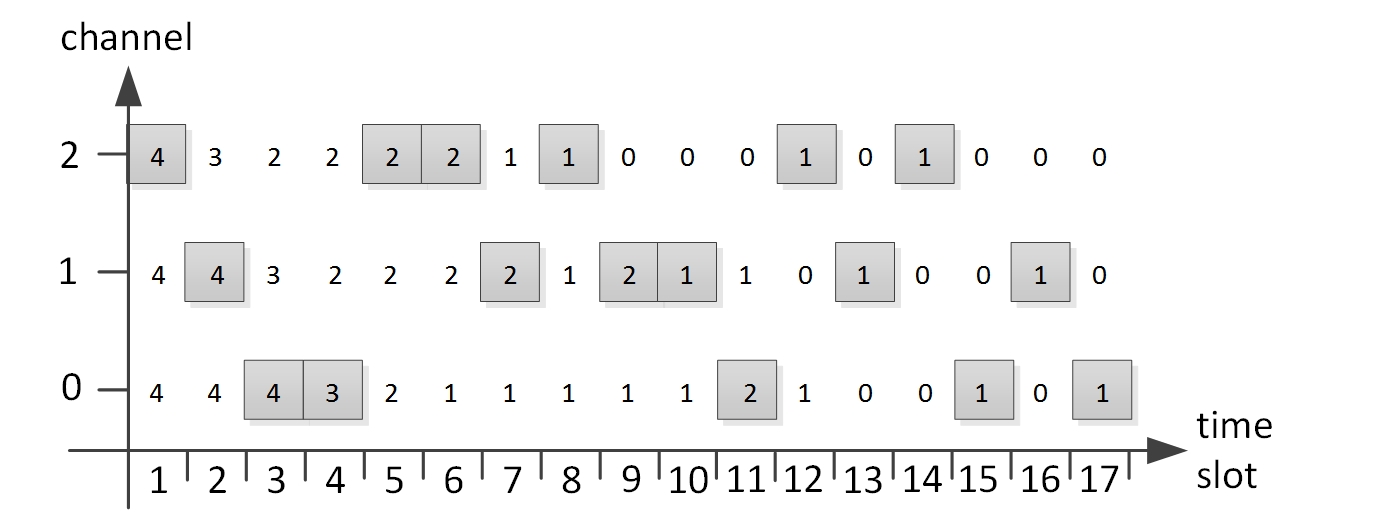}
				\label{fig:GREEDY_F1_non_makespan_MDT}
		} \hfill
		\subfloat[\ac{EMDT}-optimal schedule]{
        \includegraphics[width=1.0\textwidth]{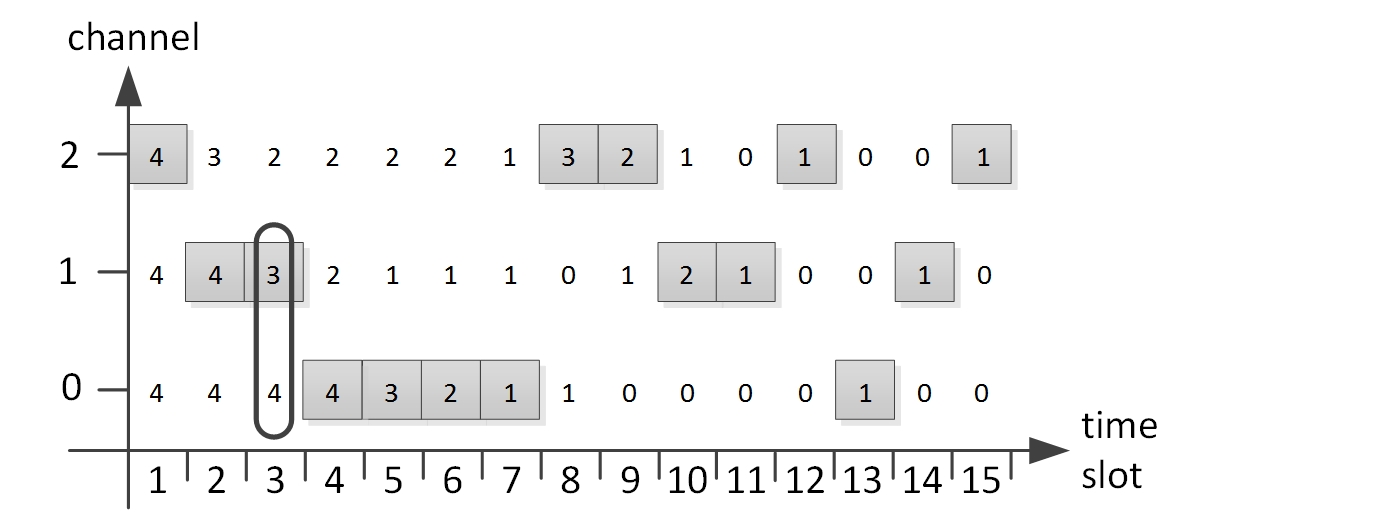}
				\label{fig:GREEDY_F1_non_makespan_MDT_GENOPT_comp}
    }
\caption{Example for a \ALG{} listening schedule depicted by gray boxes that is neither \ac{EMDT} nor makespan optimal using \ac{BI} set $B = \{1,2,3,5\} \not\in \mathbb{F}_2$ and $|C| = 3$.}
 \label{fig:GREEDY_non_makespan_MDT_opt_F1}
\end{center}
\end{figure*}

\begin{figure*}
\begin{center}
		\subfloat[GREEDY]{		
        \includegraphics[width=1.0\textwidth]{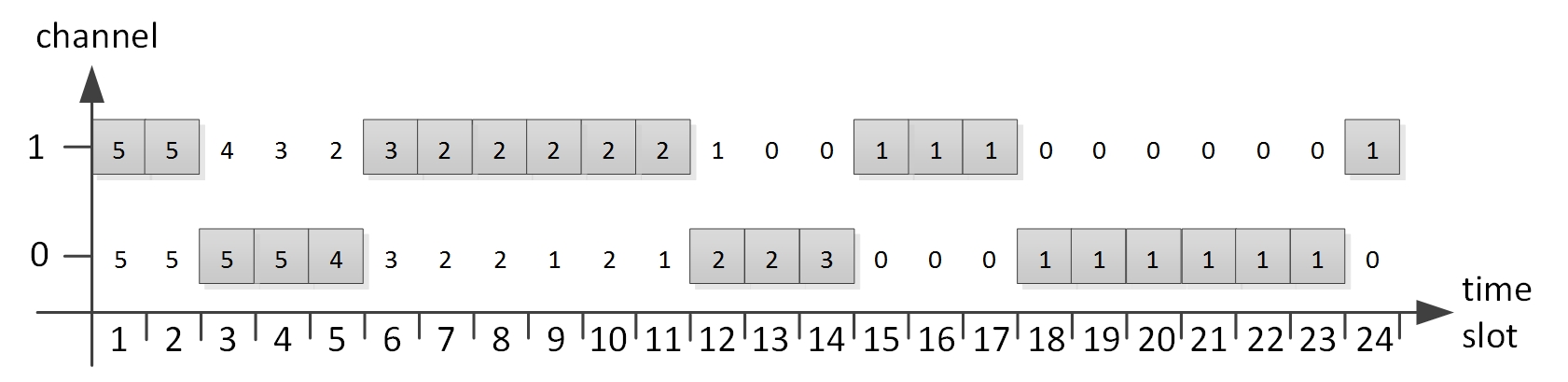}
				\label{fig:GREEDY_F2_non_MDT}
		} \hfill
		\subfloat[\ac{EMDT}-optimal schedule]{
        \includegraphics[width=1.0\textwidth]{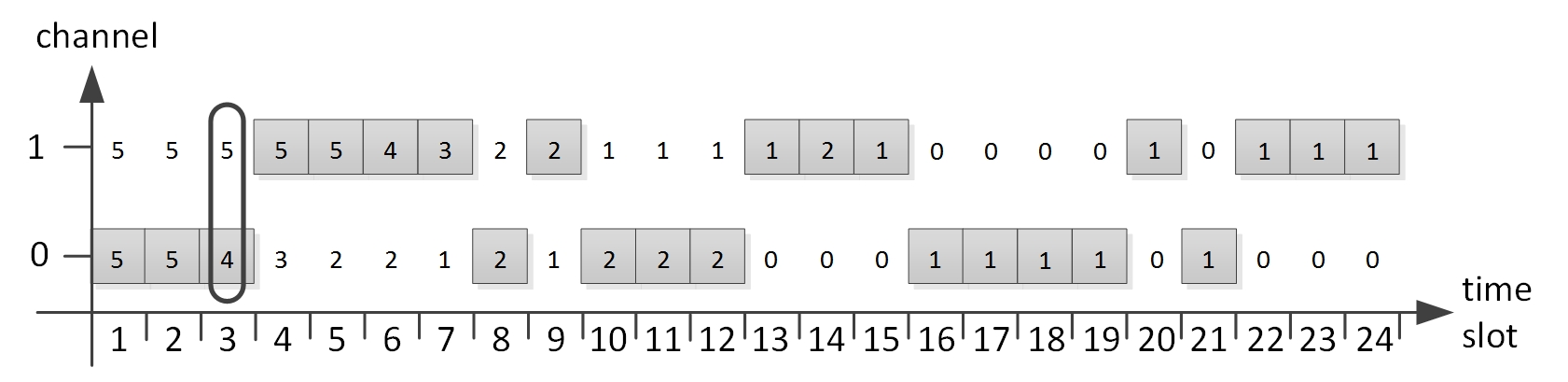}
				\label{fig:GREEDY_F2_non_MDT_GENOPT}
    }
\caption{Example for a \ALG{} listening schedule depicted by gray boxes that is not \ac{EMDT} optimal using \ac{BI} set $B = \{2,3,4,6,12\} \in \mathbb{F}_2 \setminus \mathbb{F}_3$ and $|C| = 2$.}
 \label{fig:GREEDY_non_makespan_MDT_opt_F2}
\end{center}
\end{figure*}

In the following, however, we will show that for the family of \ac{BI} sets $\mathbb{F}_3$, \ALG{} algorithms always result in \ac{EMDT}-optimal and makespan-optimal schedules. We prove this claim by showing that \ALG{} algorithms generate recursive schedules (see Definition~\ref{def:recursive_schedule}). Afterwards, we will discuss computational complexity of \ALG{} algorithms.


\begin{restatable}{proposition}{mindynumdetections}
\label{prop:mindynumdetections}
In network environments using \ac{BI} sets from $\mathbb{F}_3$, a schedule is \ALG{} if and only if it is recursive.
\end{restatable}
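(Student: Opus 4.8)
The plan is to prove the biconditional in two directions, leveraging the characterization of recursive schedules from Proposition~\ref{prop:recursive_characterization}, which states that a schedule is recursive if and only if every scan $(c,t)$ with $t\in[1,b\cdot|C|]$ discovers the configuration $(c,b,\delta_b(t))$, or equivalently, that in each time slot $t\in[1,b_i\cdot|C|]$ it discovers at least $n-i+1$ configurations. The key structural fact I would exploit is the defining property of $\mathbb{F}_3$ (already used in the proof of Proposition~\ref{prop:recursive_F3}): since every larger \ac{BI} is an integer multiple of every smaller one, discovering a configuration with a small \ac{BI} $b_i$ automatically discovers the ``nested'' configurations $(c,b_j,\delta_{b_j}(t))$ for all $j>i$ on the same channel at the same time slot. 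This nesting is what couples the greedy choice to the recursive structure.

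For the direction that recursive implies \ALG{}, I would argue that a recursive schedule is greedy by showing it always attains the maximum possible number of detections per slot. By Corollary~\ref{cor:recursive_optimal}, a recursive schedule maximizes the number of configurations detected until each time slot for every $B'\subseteq B$; combined with the characterization in Proposition~\ref{prop:recursive_characterization} that in slot $t\in[1,b_i\cdot|C|]$ it discovers at least $n-i+1$ configurations, this establishes that no other channel choice in slot $t$ could discover strictly more new configurations. Hence any recursive schedule satisfies the defining greedy property of Definition~\ref{def:mindy}. The main work here is to verify that the lower bound $n-i+1$ is in fact the per-slot maximum under $\mathbb{F}_3$, which follows because the $\mathbb{F}_3$ nesting means a single scan in an early slot simultaneously covers one fresh configuration at each of the \acp{BI} $b_i,\ldots,b_n$, and no scan can do better.

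For the converse, that \ALG{} implies recursive, I would proceed by induction on the time slot $t$, maintaining the invariant that after slot $t$ the partial schedule coincides with a recursive one in the sense of Proposition~\ref{prop:recursive_characterization}. The inductive step is the crux: I want to show that a greedy choice in slot $t$ necessarily discovers a \emph{new} configuration at the smallest \ac{BI} still ``active'' in that slot, thereby discovering exactly $n-i+1$ fresh configurations where $b_i$ is determined by which ranges $[1,b_j\cdot|C|]$ contain $t$. Here I would use Proposition~\ref{prop:recursive_F3} to know a recursive (hence maximal-per-slot) schedule exists, so the greedy maximum equals $n-i+1$; I then argue that \emph{any} channel achieving this maximum must scan a configuration whose smallest newly-covered \ac{BI} matches the recursive requirement, because the $\mathbb{F}_3$ nesting forces the count $n-i+1$ to be achievable only by scanning an as-yet-undiscovered offset at the active smallest \ac{BI}.

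The hard part will be the converse direction, specifically showing that achieving the greedy maximum in each slot forces the ``no redundant scans'' condition of Definition~\ref{def:recursive_schedule} on \emph{every} \ac{BI}, not just on the smallest active one. The subtlety is that a greedy algorithm only optimizes the raw count of new detections and might, a priori, pick a channel that discovers the right number of configurations but in a way that creates a redundant scan at some larger \ac{BI} later; I must rule this out by a careful counting argument showing that under $\mathbb{F}_3$ any such redundancy would reduce the attainable count in some future slot below $n-i+1$, contradicting the existence of a recursive schedule guaranteed by Proposition~\ref{prop:recursive_F3}. I expect to invoke a pigeonhole-style argument over the $|C|$ slots sharing a common offset $\delta_b(t)$ within $[1,b\cdot|C|]$, mirroring the reasoning in the proof of Proposition~\ref{prop:recursive_F3}.
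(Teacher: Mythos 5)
Your proposal is correct and takes essentially the same route as the paper: both directions rest on the characterization in Proposition~\ref{prop:recursive_characterization}, the existence result of Proposition~\ref{prop:recursive_F3}, Corollary~\ref{cor:recursive_optimal}, and the identification of $n-i+1$ as the per-slot maximum number of detections. The only difference is one of rigor rather than method: your explicit induction over time slots, together with the nesting observation that any channel attaining $n-i+1$ new detections must be fresh for \emph{all} active \acp{BI} simultaneously, spells out the step the paper compresses into a single sentence, since the greedy maximum is history-dependent and the mere existence of a recursive schedule does not by itself guarantee that greedy attains $n-i+1$ in every slot.
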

\begin{proof}
W.l.o.g. we assume $B=\left\{b_1,\ldots,b_n\right\}$ with $b_i<b_j$ for $i<j$. From Proposition~\ref{prop:recursive_characterization} we know that a schedule is recursive if and only if in each time slot $t\in\left[1,\,b_i\cdot\lvert C\rvert\right]$, for a $b_i\in B$, it discovers at least $n-i+1$ configurations, which is the maximum number of discoverable configurations in each time slot. From Proposition~\ref{prop:recursive_F3} we know that in a complete environment with a \ac{BI} set $B\in\mathbb{F}_3$ a recursive schedule exists. Consequently, since by definition, \ALG{} schedules maximize the number of configurations discovered in each time slot, \ALG{} schedules are recursive. The claim that a recursive schedule is \ALG{} follows directly from Corollary~\ref{cor:recursive_optimal}.
\end{proof}

From this result we are able to derive the following conclusions.

\begin{restatable}{corollary}{mindymakespan}
\label{cor:mindymakespan}
In a network environment with a \ac{BI} set $B\in\mathbb{F}_3$, \ALG{} schedules are complete, makespan-optimal, \ac{EMDT}-optimal, and maximize the number of configurations detected until each time slot, for each $B'\subset B$.
\end{restatable}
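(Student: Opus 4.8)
The plan is to derive Corollary~\ref{cor:mindymakespan} as an almost immediate consequence of the two results that precede it, namely Proposition~\ref{prop:mindynumdetections} and Corollary~\ref{cor:recursive_optimal}. The essential observation is that Proposition~\ref{prop:mindynumdetections} establishes, for $B\in\mathbb{F}_3$, a logical equivalence: a schedule is \ALG{} if and only if it is recursive. Thus every \ALG{} schedule for such a \ac{BI} set is a recursive schedule. Since all the desired conclusions---completeness, makespan-optimality, \ac{EMDT}-optimality, and maximization of the number of configurations detected until each time slot for each $B'\subset B$---have already been proven to hold for recursive schedules in Corollary~\ref{cor:recursive_optimal}, they transfer verbatim to \ALG{} schedules.

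Accordingly, the first step I would take is to invoke Proposition~\ref{prop:mindynumdetections} to conclude that, in a complete environment with $B\in\mathbb{F}_3$, any schedule generated by a \ALG{} algorithm is recursive. One minor point worth addressing explicitly is the hypothesis: Corollary~\ref{cor:recursive_optimal} and Proposition~\ref{prop:mindynumdetections} are both stated for \emph{complete} environments, whereas the corollary here says merely ``network environment.'' I would note that \ac{EMDT}-optimality is defined via the expression in Proposition~\ref{prop:mdt}, which reduces the expected value to a sum over $K_{BC}$ weighted by $1/b_\kappa$, independent of the actual multiplicities $|N_\kappa|$; hence the optimality statement is naturally phrased in terms of configurations and carries over. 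Completeness and makespan-optimality likewise depend only on the configuration detection times $T_\kappa$, so the complete-environment framing of Corollary~\ref{cor:recursive_optimal} is exactly the right vehicle.

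The second step is then purely a matter of transport: having identified the \ALG{} schedule as recursive, I would simply apply Corollary~\ref{cor:recursive_optimal} to read off each of the four claimed properties. No new computation is required, since each property is inherited directly from the recursive structure.

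I do not anticipate a genuine mathematical obstacle here, as the heavy lifting was done in establishing the $\mathbb{F}_3$ equivalence (Proposition~\ref{prop:mindynumdetections}) and in proving the optimality of recursive schedules (Corollary~\ref{cor:recursive_optimal}, itself resting on the characterization in Proposition~\ref{prop:recursive_characterization}). The only point demanding a moment's care---and the nearest thing to a subtlety---is ensuring that the quantifier ``for each $B'\subset B$'' is legitimately preserved: this is safe precisely because $\mathbb{F}_3$ is closed under taking subsets (a subset of a set in which every \ac{BI} is an integer multiple of every smaller one retains that property), so that Corollary~\ref{cor:recursive_optimal} applies uniformly to every such $B'$. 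I would therefore keep the proof to a single sentence chaining Proposition~\ref{prop:mindynumdetections} into Corollary~\ref{cor:recursive_optimal}.
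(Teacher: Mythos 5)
Your proposal is correct and takes essentially the same route as the paper, whose proof likewise reads the corollary off as a direct consequence of Proposition~\ref{prop:mindynumdetections}, Corollary~\ref{cor:recursive_optimal}, and Proposition~\ref{prop:mdt}. Your extra remarks---using Proposition~\ref{prop:mdt} to bridge \ac{MDT}-optimality to \ac{EMDT}-optimality, and noting that $\mathbb{F}_3$ is closed under subsets---simply make explicit what the paper's one-line proof leaves implicit.
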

\begin{proof}
This is a direct consequence of Corollary~\ref{cor:recursive_optimal}, and Propositions~\ref{prop:mdt} and~\ref{prop:mindynumdetections}.
\end{proof}

As shown in the example above, the assumption of a \ac{BI} set from $\mathbb{F}_3$ is crucial to prove \ac{EMDT}-optimality of \ALG{} schedules. However, relaxing the assumption on the used \ac{BI} set, we still can show that for the family of \ac{BI} sets $\mathbb{F}_2$, \ALG{} schedules are makespan-optimal, as stated in the following Proposition.

\begin{restatable}{proposition}{greedyf2makespanopt}
\label{prop:greedyf2makespanopt}
In a network environment with a \ac{BI} set $B\in\mathbb{F}_2$, \ALG{} schedules are makespan-optimal.
\end{restatable}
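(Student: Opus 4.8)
The plan is to prove the stronger, purely constructive statement that a \ALG{} schedule $\mathcal{L}$ discovers \emph{every} configuration within the first $LCM(B)\cdot\lvert C\rvert$ time slots, and then to invoke the defining property $LCM(B)=\max(B)$ of $\mathbb{F}_2$ together with the lower bound of Corollary~\ref{cor:maxBC} to conclude $T_\mathcal{L}=\max(B)\cdot\lvert C\rvert$. The engine of the argument is the same ``idle-slot-or-repeated-channel'' dichotomy used in the proof of Proposition~\ref{prop:lcmBC}, but the greedy property of \ALG{} upgrades the local improvement found there into an outright contradiction.

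I would argue by contradiction: suppose some configuration $\kappa=(c,b,\delta)$ is discovered only after slot $LCM(B)\cdot\lvert C\rvert$, i.e.\ it is undiscovered throughout the window $W=[1,\,LCM(B)\cdot\lvert C\rvert]$. The key set to look at is $\tilde{\mathcal{T}}_\kappa=\{\delta+i\cdot LCM(B)\}_{i\in\{0,\ldots,\lvert C\rvert-1\}}$. Because $b\mid LCM(B)$, each of these slots belongs to $\mathcal{T}_\kappa$, so $\kappa$ beacons on channel $c$ in all of them; and since $\delta\le b\le LCM(B)$, all $\lvert C\rvert$ of them lie inside $W$. As $\kappa$ is never discovered in $W$, $\mathcal{L}$ must avoid channel $c$ at every slot of $\tilde{\mathcal{T}}_\kappa$. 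If $\mathcal{L}$ were idle at some such slot it would already contradict \ALG{}, since scanning $c$ there would discover at least $\kappa$; hence $\mathcal{L}$ scans a channel from $C\setminus\{c\}$ in each of the $\lvert C\rvert$ slots, and by the pigeonhole principle some channel $c'\ne c$ is scanned at two of them, say $t'<t''$.

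The contradiction then follows from periodicity. Since the offset vector $\delta(\cdot)$ has period $LCM(B)$ and $t''\equiv t'\pmod{LCM(B)}$, scanning $c'$ in slot $t''$ targets exactly the configurations $\{(c',b',\delta_{b'}(t'))\,:\,b'\in B\}$ already covered by the scan of $c'$ in the earlier slot $t'$, so it discovers nothing new; yet scanning $c$ in slot $t''$ would discover the still-undiscovered $\kappa$, i.e.\ at least one configuration. Thus $c'$ is not a maximizer in slot $t''$, contradicting the definition of \ALG{}. Hence no configuration survives past $LCM(B)\cdot\lvert C\rvert$; using $LCM(B)=\max(B)$ for $B\in\mathbb{F}_2$ gives $T_\mathcal{L}\le\max(B)\cdot\lvert C\rvert$, which matches the lower bound of Corollary~\ref{cor:maxBC}, so $\mathcal{L}$ is makespan-optimal.

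The main obstacle is getting the three small verifications exactly right rather than any deep difficulty. First, the claim $\tilde{\mathcal{T}}_\kappa\subseteq\mathcal{T}_\kappa\cap W$ must use $b\mid LCM(B)$ and $\delta\le b$ carefully; note this inclusion holds for any $B\in\mathbb{F}_1$, so the discovery-within-$LCM(B)\lvert C\rvert$ statement is general, and the $\mathbb{F}_2$ hypothesis is needed only at the very end to identify $LCM(B)\lvert C\rvert$ with the optimal makespan $\max(B)\lvert C\rvert$. Second, ``zero new discoveries at $t''$'' must be justified from the fact that after the scan at $t'$ the whole set $\{(c',b',\delta_{b'}(t'))\}$ is discovered, irrespective of whether those configurations were first seen at $t'$ or earlier. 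Third, the degenerate idle-slot case must be dispatched so that the pigeonhole count over channels is legitimate. Once these are pinned down, the greedy maximality closes the argument immediately.
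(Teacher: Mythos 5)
Your proof is correct, but it follows a genuinely different route from the paper's. The paper argues locally and slot-by-slot: for every $t\leq\max(B)\cdot\lvert C\rvert$, a \ALG{} schedule must detect exactly one \emph{new} configuration $\left(c,b_n,\delta_{b_n}(t)\right)$ with the largest \ac{BI} $b_n=\max(B)$, because (i) in $\mathbb{F}_2$ a scan whose $b_n$-configuration is already known yields zero detections (all smaller \acp{BI} divide $b_n$, so their configurations at that slot were detected at the same earlier scan), while (ii) a counting argument shows some channel always still offers an undetected $b_n$-configuration at that offset; since there are exactly $\max(B)\cdot\lvert C\rvert$ such configurations, the schedule is complete by that slot. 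Your proof is instead global: you suppose a configuration $\kappa$ survives the window $\left[1,\,LCM(B)\cdot\lvert C\rvert\right]$, look at the $\lvert C\rvert$ slots $\tilde{\mathcal{T}}_\kappa=\left\{\delta+i\cdot LCM(B)\right\}$, and use pigeonhole plus $LCM$-periodicity to exhibit a slot where \ALG{} makes a zero-detection choice although scanning $c$ would find $\kappa$ --- essentially upgrading the exchange argument of Proposition~\ref{prop:lcmBC} into a contradiction with greediness, with the $\mathbb{F}_2$ hypothesis entering only in the last line via $LCM(B)=\max(B)$ and Corollary~\ref{cor:maxBC}. Each approach buys something: the paper's yields the sharper structural fact that in $\mathbb{F}_2$ a \ALG{} schedule detects a fresh largest-\ac{BI} configuration in \emph{every} slot of the optimal window (hence no idle or redundant slots, which also gives energy-optimality), whereas yours proves the more general statement that over all of $\mathbb{F}_1$ a \ALG{} schedule terminates within $LCM(B)\cdot\lvert C\rvert$ slots --- a fact the paper asserts separately without proof, remarking only that its proof is ``similar'' to this one; your write-up would serve as that missing proof verbatim.
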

\begin{proof}
Consider \ac{BI} set $B=\left\{b_1,\ldots,b_n\right\}\in\mathbb{F}_2$, with $b_i<b_j$ for $i<j$, and a channel set $C$. Consider a \ALG{} listening schedule $\mathcal{L}$. We first show that in each time slot $t\leq \max(B)\cdot\lvert C\rvert$, $\mathcal{L}$ detects exactly one configuration $\left(c,b_n,\delta_{b_n}(t)\right)$, for a $c\in C$. Assume that this is not the case, that is, at time slot $t$, $\mathcal{L}$ scans channel $c$ such that $\left(c,b_n,\delta_{b_n}(t)\right)$ have already been detected earlier in a time slot $t-\ell\cdot b_n$ for an $\ell>0$. Then, however, all other configurations $\left\{\left(c,b_i,\delta_{b_i}(t)\right)\right\}_{i\in\{1,\ldots,n\}}$ has been detected earlier as well, since in $\mathbb{F}_2$, periodicity of $\delta_{b_n}(t)$ is an integer multiple of periodicity of $\delta_{b_i}(t)$ for $i\in\{1,\ldots,n-1\}$. Thus, $\mathcal{L}$ would scan a time slot resulting in 0 detected configurations or this time slot would be idle. On the other hand, however, we know from Corollary~\ref{cor:maxBC} that for each time slot $t\leq b_n \cdot \lvert C\rvert$ there exists a channel $c\in C$ such that a configuration $\left(c,b_n,\delta_{b_n}(t)\right)$ can be detected. This contradicts the assumption that $\mathcal{L}$ is \ALG{}, proving the claim.
\end{proof}

Note that \ALG{} algorithms have a compelling property of low complexity. A straightforward example implementation proceeds as follows. It iterates over time slots until all configurations are discovered. For each configuration it stores a binary variable indicating if it has been discovered or not, resulting in $\left|C\right|\left|B\right|\sum_{b\in B}b$ bits of required memory space. At each time slot $t$, it iterates over all channels $c\in C$, computing for each channel, which of configurations $\left\{\left(c,b,\delta_b(t)\right)\,|\,b\in B\right\}$ are not yet discovered. Finally, it selects a channel, for which this number is highest. Consequently, computational complexity at each time slot is $\mathcal{O}\left(\left|C\right|\left|B\right|\right)$. The overall computational complexity depends on the number of time slots required to discover all configurations. Since for complete environments with $B\in\mathbb{F}_2$ \ALG{} algorithms are makespan-optimal, the number of time slots is $\max(B)\left|C\right|$, resulting in total complexity of $\mathcal{O}\left(\left|C\right|^2\left|B\right|\max(B)\right)$. In $\mathbb{F}_1$, we only know that \ALG{} algorithms terminate at latest in time slot $LCM(B)\lvert C\rvert$. (Proof is similar to proof of Proposition~\ref{prop:greedyf2makespanopt}.) That is, an upper bound for the computational complexity over $\mathbb{F}_1$ is $\mathcal{O}\left(\left|C\right|^2\left|B\right|LCM(B)\right)$.

Please note that if the assumption made in our system model in Section~\ref{sec:system} that the \ac{GCD} of considered \ac{BI} sets is 1 does not hold, the complexity can be further reduced by replacing $B$ by $B'=\left\{\frac{b}{GCD(B)}\right\}_{b\in B}$. This preprocessing step allows to reduce computational complexity over $\mathbb{F}_2$ to $\mathcal{O}\left(\left|C\right|^2\left|B\right|\frac{\max(B)}{GCD(B)}\right)$, over $\mathbb{F}_1$ the upper bound becomes $\mathcal{O}\left(\left|C\right|^2\left|B\right|\frac{LCM(B)}{GCD(B)}\right)$.

Note that in general, in each time slot, there might exist several channels whose selection maximizes the number of discovered configurations. Therefore, \ALG{} is a family of algorithms and not a single algorithm. The difference between them is the tiebreaking rule that, in each time slot, selects one channel to be scanned from the set of channels maximizing the number of discovered configurations. In the following we describe two deterministic and two probabilistic tiebreaking rules.

\vspace{0.1cm}
\textbf{\GreedyRnd{}} randomly selects a channel among the channels maximizing the number of discovered configurations.

\vspace{0.1cm}
\textbf{\GreedyDeter{}} selects the channel with the highest identifier among the channels maximizing the number of discovered configurations.

\vspace{0.1cm}
\textbf{\GreedyTrainRnd{}} tests if the channel scanned in the previous time slot is within the set of channels maximizing the number of discovered configurations. If yes, it is selected. If no, it proceeds as \GreedyRnd{}. By prioritizing the most recently selected channel, \GreedyTrainRnd{} tries to reduce the number of channel switches by creating a channel train.

\vspace{0.1cm}
\textbf{\GreedyTrainDeter{}} is similar to \GreedyTrainRnd{} but without a random component. It tests if the channel scanned in the previous time slot is within the set of channels maximizing the number of discovered configurations. If yes, it is selected. If no, it proceeds as \GreedyDeter{}.

\section{CHAN TRAIN - Reducing the Number of Channel Switches}
\label{sec:chantrain}

In this section, we propose an algorithm, which, similar to the \GreedyTrain{} algorithms in the previous section, in addition to minimizing \ac{EMDT}, tries to minimize the number of channel switches. While \GreedyTrainRnd{} and \GreedyTrainDeter{} do that by taking into account the most recently scanned channel, the CHAN TRAIN strategy goes one step further and also takes into account channels that will be scanned in future time slots.

To be more precise, in time slot $t=1$, CHAN TRAIN computes the set of channels maximizing the number of detected configurations in $t$. Out of those, it than selects the channel maximizing the sum of consecutive future time slots $t'$ where at least the same number of configurations can be detected and of the number of previous consecutive time slots allocated on this channel. In case multiple channels achieve the same maximum sum, it selects the channel with the lowest identifier.
It then jumps to $t+t'$ and repeats the procedure.

\begin{restatable}{proposition}{chantraingreedy}
\label{prop:chantraingreedy}
In an environment with a \ac{BI} set $B\in\mathbb{F}_3$, CHAN TRAIN is \ALG{}.
\end{restatable}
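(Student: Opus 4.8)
The plan is to verify directly that CHAN TRAIN satisfies the defining property of the family \ALG{} from Definition~\ref{def:mindy}: in every time slot it scans a channel that maximizes the number of newly discovered configurations given the history. By Proposition~\ref{prop:mindynumdetections} this is equivalent to CHAN TRAIN producing a recursive schedule, and I would use the recursive characterization of Proposition~\ref{prop:recursive_characterization} throughout. Write $B=\{b_1,\ldots,b_n\}$ with $b_i<b_j$ for $i<j$, and for a given history let $m(t)$ denote the maximum over $c\in C$ of the number of configurations newly discovered by scanning $(c,t)$. For a recursive prefix, Proposition~\ref{prop:recursive_characterization} tells us that $m(t)=n-i+1$ whenever $t\in(b_{i-1}\lvert C\rvert,\,b_i\lvert C\rvert]$ (with $b_0=0$), so $m(\cdot)$ is a non-increasing staircase, and Proposition~\ref{prop:recursive_F3} guarantees this value is attained.

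The only way CHAN TRAIN departs from an ordinary \ALG{} algorithm is that it commits to a channel for a whole block of slots instead of re-deciding in each slot; hence the entire content of the claim is that the committed channel stays \ALG{}-maximal throughout its block. I would establish this by strong induction on the time slot $s$, with hypothesis $P(s)$: the schedule CHAN TRAIN produces on $[1,s]$ is recursive (equivalently, scans an \ALG{}-maximal channel in each slot $\le s$). The base case $s=1$ holds because CHAN TRAIN first restricts to the set of channels maximizing detections in slot $1$ and only then applies its tie-break, so its choice attains $m(1)$.

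For the inductive step, assume $P(s-1)$. If $s$ is the first slot of a block, CHAN TRAIN again restricts to channels attaining $m(s)$ before tie-breaking, so it is \ALG{}-maximal at $s$ and $P(s)$ follows. The interesting case is when $s$ lies strictly inside a block committed to some channel $c$ selected at an earlier slot $t<s$ where it attained $m(t)$. By construction the block is the maximal run of consecutive slots on which scanning $c$ newly discovers at least $m(t)$ configurations, so $d_c(s)\ge m(t)$, where $d_c(s)$ is the number of configurations newly discovered by $(c,s)$ under CHAN TRAIN's actual (and look-ahead-consistent) history. On the other hand, $P(s-1)$ makes the prefix recursive, so by Proposition~\ref{prop:recursive_characterization} the maximum attainable at slot $s$ is $m(s)$, and since $t\le s$ and the staircase is non-increasing we have $m(s)\le m(t)$. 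Combining $m(t)\le d_c(s)\le m(s)\le m(t)$ forces $d_c(s)=m(s)$, i.e. $c$ is \ALG{}-maximal at $s$; this gives $P(s)$ and closes the induction. The tie-breaking rule (maximizing the combined length of past and future consecutive slots on a channel, breaking further ties by lowest identifier) only ever selects among channels already filtered to be detection-maximizing, so it never disturbs this argument.

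The step I expect to be the main obstacle is closing the potential circularity in the inside-a-block case: asserting that the maximum detectable at $s$ equals the staircase value $m(s)$ presupposes the prefix is recursive up to $s-1$, which is exactly the property that must be maintained, and the strong-induction formulation is what makes this legitimate. A secondary point requiring care is the precise reading of ``the same number of configurations can be detected'' in CHAN TRAIN's definition: it must mean configurations \emph{newly} discovered, computed against the block's own committed scans, so that the run condition indeed yields $d_c(s)\ge m(t)$ with the correct history; under this reading the chain of inequalities goes through and the proof is complete.
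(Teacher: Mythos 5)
Your proposal is correct and takes essentially the same route as the paper: the paper's proof simply asserts that, by Proposition~\ref{prop:recursive_characterization} and the definition of CHAN TRAIN, the generated schedules are recursive for $B\in\mathbb{F}_3$, and then concludes via Proposition~\ref{prop:mindynumdetections} that CHAN TRAIN is \ALG{}. Your strong induction over blocks, with the non-increasing staircase bound $m(t)\le d_c(s)\le m(s)\le m(t)$, is precisely the detail the paper leaves implicit in that first step.
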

\begin{proof}
From Proposition~\ref{prop:recursive_characterization} and the definition of CHAN TRAIN we conclude that CHAN TRAIN generates recursive schedules for \ac{BI} sets from $\mathbb{F}_3$. Propositions~\ref{prop:mdt} and~\ref{prop:mindynumdetections} then implies the claim.
\end{proof}

Note that for a \ac{BI} set $B\not\in\mathbb{F}_3$ Proposition~\ref{prop:chantraingreedy} is no longer true, as illustrated by the example in Figure~\ref{fig:CHAINTRAIN_nonGreedyBehaviour}.


\begin{figure*}
\begin{center}
\centerline{
		\subfloat[CHAN TRAIN]{		
        \includegraphics[width=0.5\textwidth]{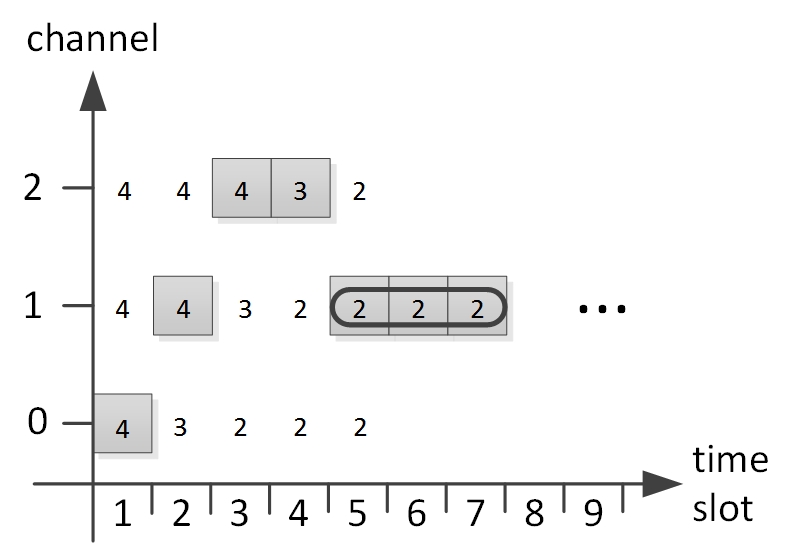}
				\label{fig:CHANTRAIN_F2}
		}
		\hspace{\evalHspace}
		\subfloat[GREEDY]{
        \includegraphics[width=0.5\textwidth]{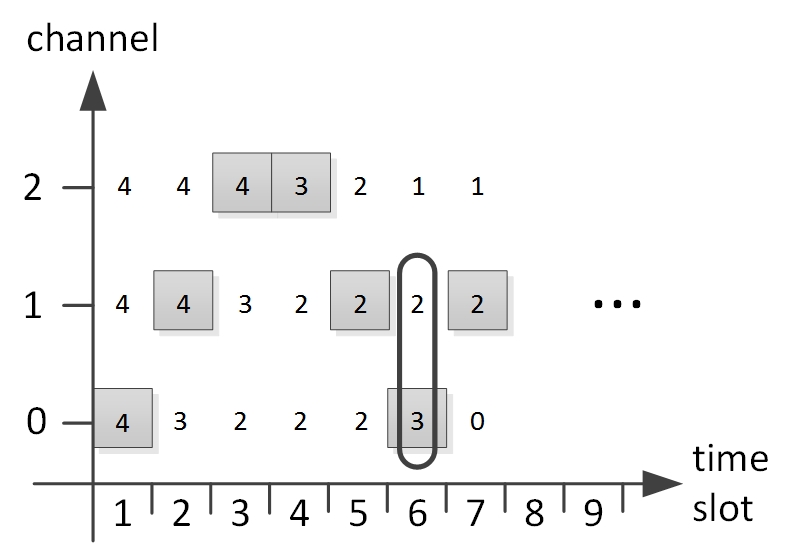}
				\label{fig:CHANTRAIN_GREEDY_F2}
    }
}
\caption{Example showing the non-greedy behavior of the CHAN TRAIN strategy for the BI set $B = \{1,2,3,6\} \in \mathbb{F}_2\setminus\mathbb{F}_3$ and $|C| = 3$}
 \label{fig:CHAINTRAIN_nonGreedyBehaviour}
\end{center}
\end{figure*}

\section{GENOPT - Minimizing \acl{MDT} for Arbitrary \acl{BI} Sets}
\label{sec:genopt}

In the previous sections, we presented a family of low-complexity algorithms that are \ac{EMDT}-optimal for the broad family of \ac{BI} sets $\mathbb{F}_3$. Still, those algorithms might fail to achieve optimality for \ac{BI} sets in $\mathbb{F}_1\setminus\mathbb{F}_3$, as shown in examples in Figures~\ref{fig:GREEDY_non_makespan_MDT_opt_F1} and~\ref{fig:GREEDY_non_makespan_MDT_opt_F2}.

In this section, we formulate a set of linear constraints describing a complete listening schedule for arbitrary \ac{BI} sets $B\in\mathbb{F}_1$. We use this set to develop a discovery strategy GENOPT that minimizes \ac{EMDT}, by complementing it with an appropriate linear objective function. Note that this set of constraints might be used to generate schedules optimized w.r.t. other metrics. Since GENOPT involves solving an \ac{ILP}, it suffers from high computational complexity and memory consumption, and should only be performed offline and for network environments that are limited in size. 

Note that listening schedules generated by GENOPT might not be makespan-optimal as shown in following example.
Figure~\ref{fig:GENOPT_F1_non_makespan_opt} depicts an example using $B = \{1,2,4,5\}$ and $|C| = 2$ for which an \ac{EMDT}-optimal listening schedule cannot be constructed within $\max(B)\cdot |C|$ time slots.
Figure~\ref{fig:GENOPT_F1_makespan_tmax-optMakespan} shows an \ac{EMDT}-optimal listening schedule with the additional constraint of using at most $\max(B) \cdot |C|$ time slots. 
Observe that its \ac{EMDT} is 3.875 time slots as compared to the optimal value of 3.75 time slots of the schedule shown in Figure~\ref{fig:GENOPT_F1_makespan_tmax-lcm}. 

\begin{figure*}
\begin{center}
		\subfloat[Makespan constrained by the optimum value $\max(B) \cdot |C|$]{		
        \includegraphics[width=0.6\textwidth]{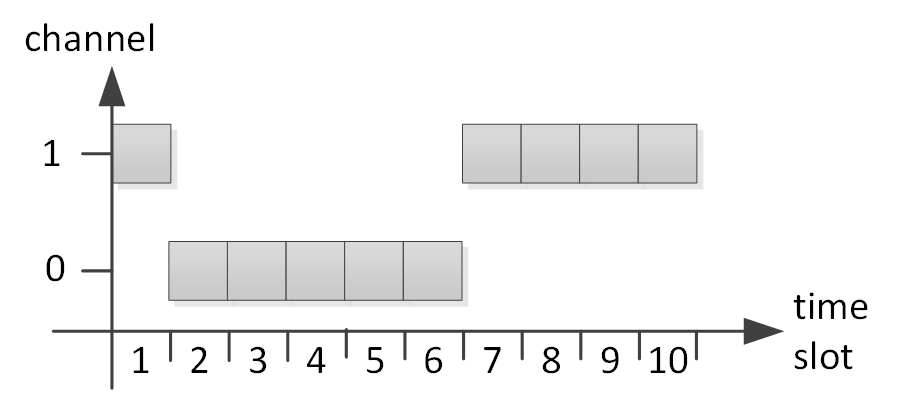}
				\label{fig:GENOPT_F1_makespan_tmax-optMakespan}
		} \hfill
		\subfloat[Makespan constrained by $LCM(B) \cdot |C|$]{
        \includegraphics[width=0.6\textwidth]{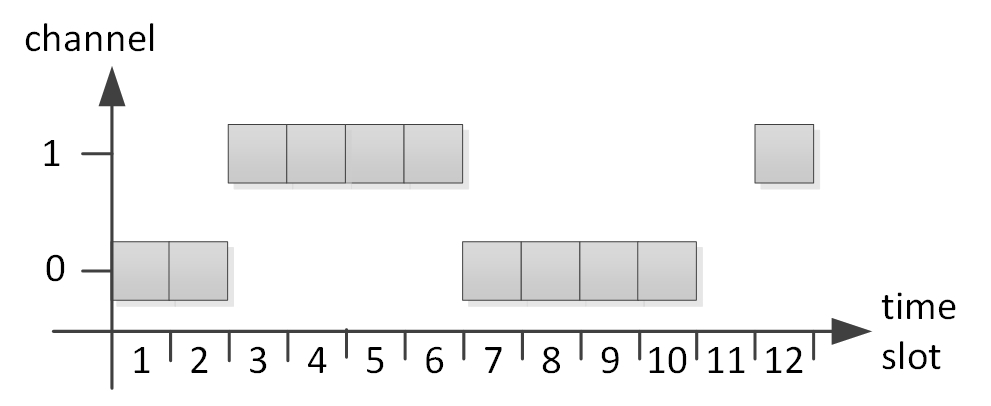}
				\label{fig:GENOPT_F1_makespan_tmax-lcm}
    }
\caption{Example for a listening schedule generated by GENOPT, using \ac{BI} set $B = \{1,2,4,5\} \in \mathbb{F}_1 \setminus \mathbb{F}_2$ and $|C| = 2$. Observe that imposing an upper bound on the makespan of the generated schedule increases \ac{EMDT}. Therefore, in this example, no schedule exists which is both \ac{EMDT}-optimal and makespan-optimal.}
 \label{fig:GENOPT_F1_non_makespan_opt}
\end{center}
\end{figure*}


To formulate GENOPT, we define the following variables. 

\begin{eqnarray*}
\begin{aligned}
x_{c,t,b} &= \begin{cases}
				\begin{aligned}
		   \textrm{1  ,} & \textrm{ if configuration } \left(c,b,\delta_t(b)\right) \textrm{ is detected during scan of channel } c \textrm{ in time slot } t \\
           \textrm{0  ,} & \textrm{ otherwise}  
				\end{aligned}					
        \end{cases} \\
h_{c,t} &= \begin{cases}
				\begin{aligned}
           \textrm{1  ,} & \textrm{ if a scan is performed on channel } c \textrm{ at time slot } t \\
           \textrm{0  ,} & \textrm{ otherwise}  
				\end{aligned}	
        \end{cases}
\end{aligned}
 \end{eqnarray*}

GENOPT can be formulated as follows.

\begin{align*}
 \min\quad     &\frac{1}{\left|C\right|\left|B\right|} \sum_{c\in C}\sum_{b\in B}\sum_{t=1}^{LCM(B)\left|C\right|}{x_{c,t,b}\cdot t\cdot\frac{1}{b}} &\notag   \\
 \text{s.t.}   &\sum_{m=0}^{\frac{LCM(B) \cdot \left|C\right|}{b} - 1}{x_{c,mb+\delta,b}}=1 \qquad\text{for all}\;c\in C,\;b\in B,\;\delta\in\left\{1,\ldots,b\right\}  &\label{eq:C1}\tag{C1}\\
							 &x_{c,b,t}\leq h_{c,t} 																											\hspace{3.1cm}\text{for all}\;c\in C,\;b\in B,\;t\in\left\{1,\ldots,LCM(B)\left|C\right|\right\} & \label{eq:C2} \tag{C2} \\
							 &\sum_{c\in C}h_{c,t}\leq 1 																								  \hspace{3cm}\text{for all}\;t\in\left\{1,\ldots,LCM(B)\left|C\right|\right\}\,.  \label{eq:C3} &\tag{C3}
\end{align*}

In this formulation, constraint~\eqref{eq:C1} ensures that each configuration is detected,~\eqref{eq:C2} ensures that a configuration $\left(c,b,\delta_t(b)\right)$ can only be detected if channel $c$ is scanned during time slot $t$,~\eqref{eq:C3} makes sure that only one channel is scanned during a time slot.

\section{\OPTBTwo{} - Special case: $\lvert B\rvert=2$}
\label{sec:opt2}

In this section, we describe the discovery strategy \OPTBTwo{} for the special case where the \ac{BI} set contains two elements, $B=\left\{b_1,\,b_2\right\}$, $b_1<b_2$, which is \ac{EMDT}-optimal for any $B\in\mathbb{F}_1$ with $\lvert B\rvert=2$. The number of channels can be arbitrarily large.

\begin{restatable}[\OPTBTwo{}]{definition}{opt2}
\label{def:opt2}
We are given a \ac{BI} set $B=\left\{b_1,\,b_2\right\}$, with $b_1<b_2$, and a set of channels $C=\left\{c_1,\ldots,c_m\right\}$. Discovery strategy \OPTBTwo{} generates a listening schedule, where channel $j\in\{1,\ldots,m\}$ is scanned during time slots $\left[(j-1)b_1+1,\,jb_1\right]$ and
\newline$\left[mb_1+(m-j)(b_2-b_1)+1,\,mb_1+(m-j+1)(b_2-b_1)\right]$.
\end{restatable}

\begin{restatable}[]{proposition}{propopt2}
\label{prop:opt2}
In network environments with \ac{BI} sets $B=\left\{b_1,\,b_2\right\}$, \OPTBTwo{} is \ac{EMDT}-optimal and makespan-optimal.
\end{restatable}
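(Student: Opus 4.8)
The plan is to prove the two claims separately but with a shared combinatorial core. First I would check that the schedule of Definition~\ref{def:opt2} is \emph{complete} and has makespan exactly $\max(B)\cdot\lvert C\rvert$; makespan-optimality is then immediate from Corollary~\ref{cor:maxBC}. Then I would prove \ac{EMDT}-optimality by showing that \OPTBTwo{} drives the total ``undiscovered weight'' down at the fastest rate permitted by any complete schedule, at \emph{every} time slot. Write $m=\lvert C\rvert$ throughout.

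For completeness and makespan, I would fix a channel $j$. Its first block $[(j-1)b_1+1,\,jb_1]$ consists of $b_1$ consecutive slots, hence realizes every residue modulo $b_1$, so all configurations $(c_j,b_1,\delta)$, $\delta\in\{1,\ldots,b_1\}$, are discovered there. The crux is the $b_2$-configurations: channel $j$ is scanned in $b_1+(b_2-b_1)=b_2$ slots in total, so it suffices to show these realize $b_2$ distinct offsets modulo $b_2$. The first block covers $b_1$ consecutive residues modulo $b_2$ ending at $jb_1 \pmod{b_2}$, and the key reduction $mb_1+(m-j)(b_2-b_1)+1\equiv jb_1+1\pmod{b_2}$ (using $(m-j)b_2\equiv 0$) shows the second block is the run of $b_2-b_1$ consecutive residues that continues exactly where the first block stopped. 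The two blocks therefore tile all $b_2$ residues, giving completeness. Since the channel blocks partition the slots $[1,\,mb_2]$ and the schedule terminates at $mb_1+m(b_2-b_1)=mb_2=\max(B)\cdot\lvert C\rvert$, Corollary~\ref{cor:maxBC} yields makespan-optimality.

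For \ac{EMDT}-optimality I would rewrite the objective of Proposition~\ref{prop:mdt} via the tail-sum identity $T_\kappa=\sum_{t\geq 1}\mathbbm{1}[T_\kappa\geq t]$, obtaining $\sum_{\kappa}\tfrac{1}{b_\kappa}T_\kappa=\sum_{t\geq 1}U(t)$, where $U(t)=U_1(t)+U_2(t)$ is the total weight $\tfrac{1}{b_\kappa}$ of configurations still undiscovered at the start of slot $t$, split by beacon interval. For any complete schedule, a single slot scans one channel and can freshly discover at most one $b_1$-configuration and at most one $b_2$-configuration, so $U_1$ drops by at most $\tfrac{1}{b_1}$ and $U_2$ by at most $\tfrac{1}{b_2}$ per slot. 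Combined with $U_i\geq 0$ and $U_1(1)=mb_1\cdot\tfrac{1}{b_1}=m$, $U_2(1)=mb_2\cdot\tfrac{1}{b_2}=m$, this gives the pointwise lower bounds $U_1(t)\geq\max\!\bigl(0,\,m-\tfrac{t-1}{b_1}\bigr)$ and $U_2(t)\geq\max\!\bigl(0,\,m-\tfrac{t-1}{b_2}\bigr)$ valid for every complete schedule.

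Finally I would show \OPTBTwo{} attains these bounds with equality at every $t$. By the counting argument above each of the first $mb_1$ slots discovers a fresh $b_1$-configuration, so $U_1(t)=m-\tfrac{t-1}{b_1}$ until it reaches $0$ at $t=mb_1+1$; and since there are exactly $mb_2$ slots and $mb_2$ distinct $b_2$-configurations, all of which are discovered by completeness, each slot discovers exactly one fresh $b_2$-configuration, so $U_2(t)=m-\tfrac{t-1}{b_2}$. Hence $U(t)$ meets its pointwise lower bound for every $t$, so $\sum_{t\geq 1}U(t)$ is minimized and \OPTBTwo{} is \ac{EMDT}-optimal. The main obstacle is the completeness step, i.e.\ verifying the congruence $mb_1+(m-j)(b_2-b_1)+1\equiv jb_1+1\pmod{b_2}$ so that the two per-channel blocks tile the residues; this same fact simultaneously guarantees completeness and the one-fresh-discovery-per-slot property that makes the optimality bound tight.
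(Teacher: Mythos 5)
Your proof is correct, but it takes a genuinely different route from the paper's. The paper's entire proof is two sentences: it observes (without verification) that \OPTBTwo{} generates recursive schedules in the sense of Definition~\ref{def:recursive_schedule}, and then inherits completeness, makespan-optimality, and \ac{EMDT}-optimality wholesale from Corollary~\ref{cor:recursive_optimal} and Proposition~\ref{prop:mdt}. What you do instead is re-derive that machinery from scratch for the two-element case. Your congruence $mb_1+(m-j)(b_2-b_1)+1\equiv jb_1+1\pmod{b_2}$ and the resulting tiling of residues is precisely the verification that the schedule has no idle slots and no redundant scans --- i.e., the content hidden behind the paper's unproven ``Observe that \OPTBTwo{} generates recursive schedules.'' Your tail-sum argument ($T_\kappa=\sum_{t\geq1}\mathbbm{1}\left[T_\kappa\geq t\right]$, plus pointwise lower bounds on the undiscovered weight $U_1,U_2$ and the pigeonhole fact that $mb_2$ slots must each yield a fresh $b_2$-discovery) is a self-contained proof of the implication ``maximal fresh discoveries in every slot implies \ac{MDT}-minimality,'' which the paper asserts in Corollary~\ref{cor:recursive_optimal} with essentially no proof. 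Your version is longer but buys rigor and self-containedness: it fills in both steps the paper leaves as observations, and the weighted tail-sum bound is arguably the cleanest justification of why recursively structured schedules minimize \ac{EMDT}. The paper's version buys brevity and reuse: once recursive schedules and their properties have been set up for the \ALG{} analysis, \OPTBTwo{} becomes a one-line corollary, at the cost of resting on results whose own proofs are little more than ``follows from the definition.''
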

\begin{proof}
Observe that \OPTBTwo{} generates recursive schedules. The claim then follows from Corollary~\ref{cor:recursive_optimal} and Proposition~\ref{prop:mdt}.
\end{proof}

\section{Results Overview}
Table~\ref{tab:strategy_overview} provides an overview of optimality results for studied discovery strategies w.r.t. \ac{EMDT}, makespan and number of channel switches, for different \ac{BI} families, accompanied by computational complexity. Note that if a discovery strategy is makespan-optimal the generated listening schedules are also energy-optimal, since they do not contain idle time slots or redundant scans. The performance metrics we use for evaluation are described in detail in Section~\ref{sec:performance_metrics}.

\begin{table}[h]
\begin{center}
\begin{tabular}{lcccl}
\hline
Strategy& Makespan		& \ac{EMDT} 				& Channel Switches		 & Complexity 	\\\hline
PSV 		&	$\mathbb{F}_1$	&									& $\mathbb{F}_1$	 	 & $\mathcal{O}\left(C\right)$\\
(SW)OPT & $\mathbb{F}_3$	& $\mathbb{F}_3$	& 								 	 & High (\ac{ILP})\\
GREEDY 	& $\mathbb{F}_2$ 	& $\mathbb{F}_3$	&										 & $\mathcal{O}\left(\left|C\right|^2\left|B\right|LCM(B)\right)$	\\
GENOPT 	& $\mathbb{F}_2$	& $\mathbb{F}_1$	&									   & High (\ac{ILP})\\
\OPTBTwo{} (only $|B|=2$) & $\mathbb{F}_1$   & $\mathbb{F}_1$  & & $\mathcal{O}\left(C\right)$\\
 \hline
\end{tabular}
\end{center}
\caption{Overview of discovery strategies and their optimality w.r.t. the makespan, \ac{EMDT}, and number of channel switches, for different families of \ac{BI} sets.}
\label{tab:strategy_overview}
\end{table}

\chapter{Evaluation}
\label{sec:numerical_eval}

In the following we present a numerical evaluation of the proposed discovery strategies. Complementing optimality results w.r.t. \ac{EMDT} and makespan for \ac{BI} sets from $\mathbb{F}_3$, presented in Chapter~\ref{sec:strategies}, we evaluate the proposed algorithms over \ac{BI} sets for which \ac{EMDT}-optimality and/or makespan-optimality is not guaranteed, also considering performance metrics such as number of channel switches, and energy consumption. We will show that the performance of \ALG{} algorithms is very close to the optimum in most settings. Given their low complexity, they thus represent an attractive solution to the problem of neighbor discovery for a broad range of deployment scenarios.

We will describe the evaluation setting in Section~\ref{sec:selection_parameters}, the performance metrics in Section~\ref{sec:performance_metrics}, and the results of the evaluation in Section~\ref{sec:ev_results}.


\section{Evaluation Setting}
\label{sec:selection_parameters}

In the following, we evaluate and compare \ALG{} discovery algorithms with PSV, CHAN TRAIN, and GENOPT strategies. In order to evaluate the proposed algorithms over \ac{BI} sets from $\mathbb{F}_1$ and $\mathbb{F}_2$ we draw random samples from these families of \ac{BI} sets until the confidence intervals for the studied performance metrics are sufficiently small. Note, however, that in order to include GENOPT, which is based on solving an \ac{ILP}, into the evaluation, we have to restrict the size of the studied scenarios, to keep the computational effort feasible. The number of decision variables of GENOPT depends mainly on the $LCM(B)$ and the number of channels $|C|$. Consequently, we have to restrict the maximum number of channels, as well as the number of elements in studied \ac{BI} sets, and their size. In particular, we vary the number of channels between 2 and 12. The minimum size of \ac{BI} sets is set to $\lvert B\rvert=3$, since the special case $\lvert B\rvert=2$ can be solved in an \ac{EMDT}-optimal and makespan-optimal way for arbitrary \ac{BI} sets from $\mathbb{F}_1$, as show in Section~\ref{sec:opt2}. Please note that due to the fact that a listening schedule for a \ac{BI} set $B=\left\{b_1,\ldots,b_n\right\}$ can be transformed into a schedule for a \ac{BI} set $B'=\left\{c\cdot b_1,\ldots,c\cdot b_n\right\}$, substituting $\tau'=c\cdot\tau$, we only consider \ac{BI} sets with $GCD(B)=1$. Still, the results of the evaluation are valid for a much broader range of \ac{BI} sets, including all sets obtained by multiplying individual \acp{BI} with a constant factor.

In the following we describe the procedure used to sample families of \ac{BI} sets $\mathbb{F}_1$ and $\mathbb{F}_2$ as well as the approach to computing \ac{ILP}-based listening schedule for the GENOPT strategy.

\subsection{Sampling $\mathbb{F}_1$}
\label{subsec:GenerationF1}

We draw random samples $B\in\mathbb{F}_1$ as follows. We first draw $\lvert B\rvert$ from a uniform distribution over $\left[3,6\right]$. We then draw individual \acp{BI} from a uniform distribution over $\left[1,10\right]$. The selected \acp{BI} are then divided by their GCD. If the resulting \ac{BI} set is new, it is used for evaluation. In total, this approach results in 775 unique \ac{BI} sets.

\subsection{Sampling $\mathbb{F}_2$}
\label{subsec:GenerationF2}

We draw random samples $B\in\mathbb{F}_2$ as follows. For each number from $\left[1,256\right]$ we first compute the power set of its factors. From the computed power sets, we select subsets whose cardinality is between 3 and 8, that contain the number itself, and whose GCD is one. In total, we obtain 259286 unique \ac{BI} sets.

\subsection{Computing GENOPT}
\label{sec:computing_genopt}

The computational complexity of GENOPT strongly depends on the maximum number of time slots allowed for the solution. Corollary~\ref{cor:mdtcor1} states that the makespan of \ac{EMDT}-optimal schedules is bounded from above by $LCM(B)\cdot\lvert C\rvert$. Note that $LCM(B)$ is $\mathcal{O}\left(\max(B)!\right)$ and thefore computing solutions over such a higher number of time slots can be extremely time consuming. However, we observed that in the vast majority of cases, optimal solutions can be found within $2\cdot\max(B)$ time slots. Therefore, in order to reduce solving time, we execute GENOPT iteratively, increasing the upper bound on the maximum time slots. More precisely, we initialize the Gurobi solver~\cite{Gurobi} with the best solution found by executing the \GreedyRnd{} strategy multiple times (1000 times for $B\in\mathbb{F}_1$, 50 times for $B\in\mathbb{F}_2$, because of higher complexity due to larger considered $\max(B)$) and restrict the maximum number of time slots to the number of time slots required by this solution. In the second step, the number of slots is doubled, while in the final third step, it is set to $LCM(B)\cdot\lvert C\rvert$. For the first and second iteration the time limit for the optimization is set to 1800 seconds, while for the final iteration it is set to 7200 seconds. If, however, after the time limit is exceeded the optimality gap is still greater than 3\% for $B \in \mathbb{F}_1$ and 1\% for $B \in \mathbb{F}_2$, the optimization is resumed until the optimality gap is sufficiently small.

\section{Performance Metrics}
\label{sec:performance_metrics}

The discovery strategies have been compared w.r.t. the performance metrics described in the following. 


\paragraph{\ac{EMDT}}

The \ac{EMDT} is the expected mean detection time of all neighbor networks, given that all channels, \acp{BI}, and offsets have an equal probability to be selected by a network (see Chapter~\ref{sec:preliminaries} for more details). Here expected value is over all possible network environments for a given set of channels and set of \acp{BI}, while mean is over the neighbor networks in a particular environment (see Section~\ref{sec:emdt} for more details). In the following, \ac{EMDT} results are depicted normalized by their optimal value, or, to be more precise, by the objective boundary obtained while generating the listening schedule for GENOPT (which is within 1\% to 3\% from the optimum, see Section~\ref{sec:computing_genopt}).

The \ac{EMDT} is an important metric in scenarios in which a device has to discover a subset of its neighbors, e.g. in \acp{DTN} in which devices have to detect suitable forwarders.

\paragraph{Number of channel switches}

This metric refers to the number of times a device has to change the listening channel when executing a complete listening schedule. In the following, it is depicted normalized by the minimum number of switches for a complete schedule, which is $(|C| - 1)$. When devices perform a channel switch they are in a deaf period in which they are not able to receive any frames, which might lead to losing beacons. A low number of channel switches is especially important if the ratio of the channel switching time to the duration of the time slot $\tau$ is large. In~\cite{Karowski11, Karowski13} it has been shown that listening schedules with higher number of channel switches result in higher discovery times when being executed in a realistic environment on IEEE~802.15.4 devices, even though under ideal conditions, when the duration of a channel switch is assumed to be 0, they exhibit identical performance. 

\paragraph{Makespan}

The makespan is the total number of time slots required by a listening schedule to detect all networks that are using any configuration $\kappa\in K_{BC}$ (see Definition~\ref{def:makespan}). In other words, it is the number of time slots required by the schedule to detect all neighbors in a complete environment. Note that in environments that are not complete, makespan does not equal the detection time of the last neighbor. Still, the device performing the discovery cannot know when all neighbors are detected and, therefore, has to execute the schedule until all potential neighbors using a configuration $\kappa\in K_{BC}$ are detected. In the following, makespan is depicted normalized to the optimum makespan $\max(B)\cdot\lvert C\rvert$, established by Corollary~\ref{cor:maxBC}.

\paragraph{Number of active time slots}

The number of active slots of a schedule is the number of time slots during which a device is actually listening on any channel. It is proportional to the amount of energy required to execute a schedule. The number of active time slots is always less or equal than the makespan due to the fact that a listening schedule might contain idle time slots during which no scan is allocated on any channel. In particular, makespan-optimal schedules are also optimal w.r.t. the number of active time slots. In the following, the number of active time slots is depicted normalized to its optimum value $\max(B)\cdot\lvert C\rvert$.

\section{Evaluation Results}
\label{sec:ev_results}

In the following, we extend the analytical optimality results established in Chapter~\ref{sec:strategies} by numerical evaluations over sets of \acp{BI} for which optimality cannot be proven. Moreover, in addition to \ac{EMDT} and makespan considered in Chapter~\ref{sec:strategies}, we extend the set of considered performance metrics by the number of channel switches and number of active time slots. As described in more details in Section~\ref{sec:selection_parameters}, we randomly sample the families of \acp{BI} $\mathbb{F}_1$ and $\mathbb{F}_2$, and vary the number of channels $\lvert C\rvert$ between 2 and 12. For each $\lvert C\rvert\in[2,12]$ we repeat the evaluation for 150 different \ac{BI} sets. The values depicted in the following are mean values computed over 150 iterations with different \ac{BI} sets, accompanied by the confidence intervals for the confidence level 95\%.

\subsection{Family of \ac{BI} Sets $\mathbb{F}_2$}
\label{subsec:results_f2}

In Chapter~\ref{sec:strategies} we showed that \ALG{} algorithms are makespan-optimal for \ac{BI} sets from $\mathbb{F}_2$. Moreover, the results of the numerical evaluation revealed that CHAN TRAIN is makespan-optimal over the evaluated \ac{BI} sets from $\mathbb{F}_2$. Therefore, in the following, we only show results for the \ac{EMDT} and the number of channel switches. (Note that the number of active time slots is also optimal for makespan-optimal schedules.)

Figure~\ref{fig:F2_discTime} depicts the normalized \ac{EMDT} of the evaluated discovery strategies. We observe that \ALG{} algorithms are within 2\% of the optimum and approximate the optimum even further when the number of channels increases. CHAN TRAIN has a slightly higher \ac{EMDT}, which is, however, still within 3\% of the optimum. In contrast, PSV has a significantly larger \ac{EMDT}, reaching 400\% of the optimum, and diverging when the number of channels increases. Note that the 4 studied \ALG{} algorithms are almost indistinguishable w.r.t. to their \ac{EMDT} at the given confidence level.

Figure~\ref{fig:f2_numeric_results} shows the number of channel switches normalized to the minimum value $\lvert C\rvert -1$. We can observe that the design of CHAN TRAIN aiming at reducing the number channel switches is successful at achieving its goal. It results in the second lowest number of channel switches with about 30 times the number of switches as compared to optimum achieved by PSV. Furthermore the distance to the optimum number is constant with increasing number of channels. Further we observe that \ALG{} algorithms \GreedyTrainRnd{} and \GreedyTrainDeter{}, that prioritize the channel allocated on the previous time slot if possible, allow to reduce the number of channel switches w.r.t. their versions that do not do the prioritization. We also observe that GENOPT and \GreedyRnd{} have relatively poor performance, requiring up to a factor of 200 more switches than the optimum value achieved by PSV.

\begin{figure*}
	\centering
		\subfloat[\ac{EMDT}]{
        \includegraphics[width=\evalFigWidth\textwidth]{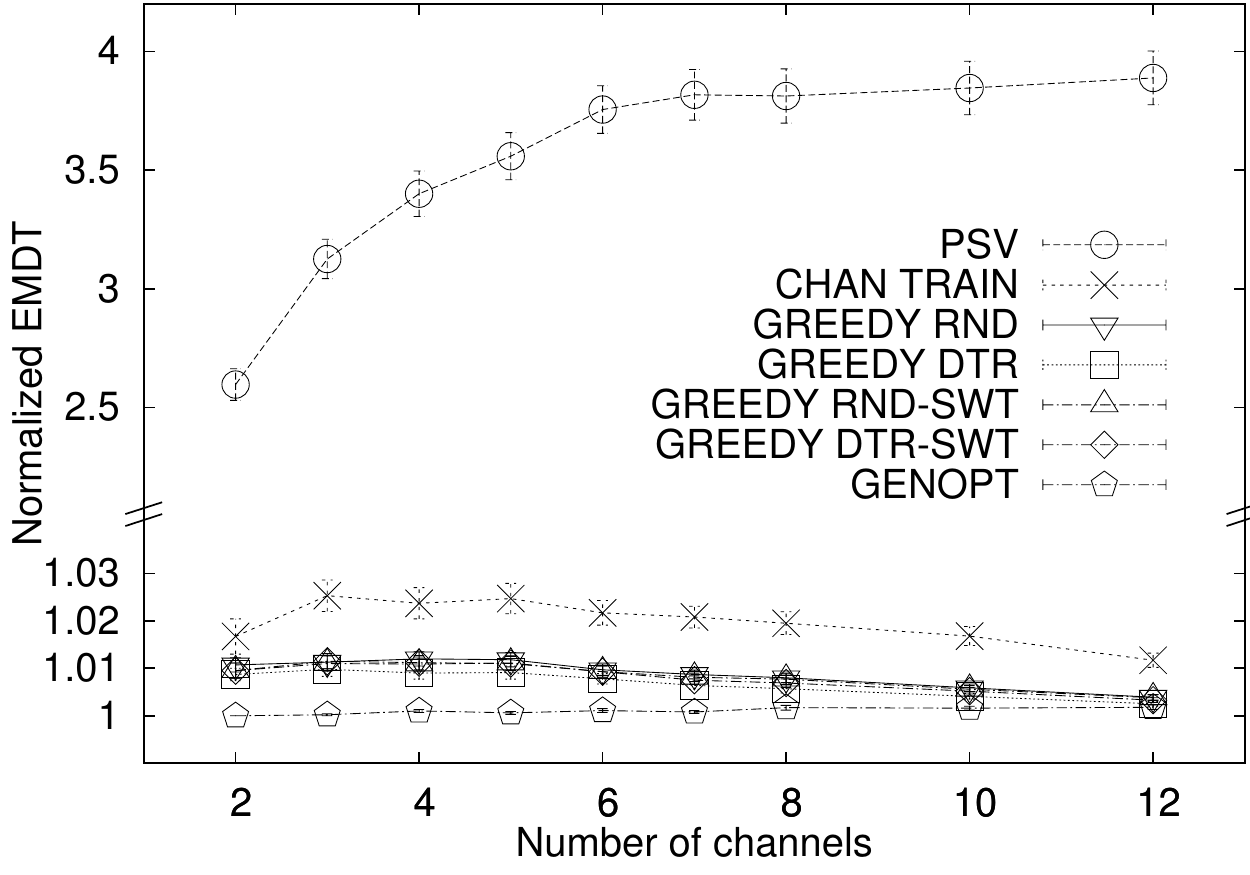}
				\label{fig:F2_discTime}
    }
		\hfill
		\subfloat[Number of Channel Switches]{	
        \includegraphics[width=\evalFigWidth\textwidth]{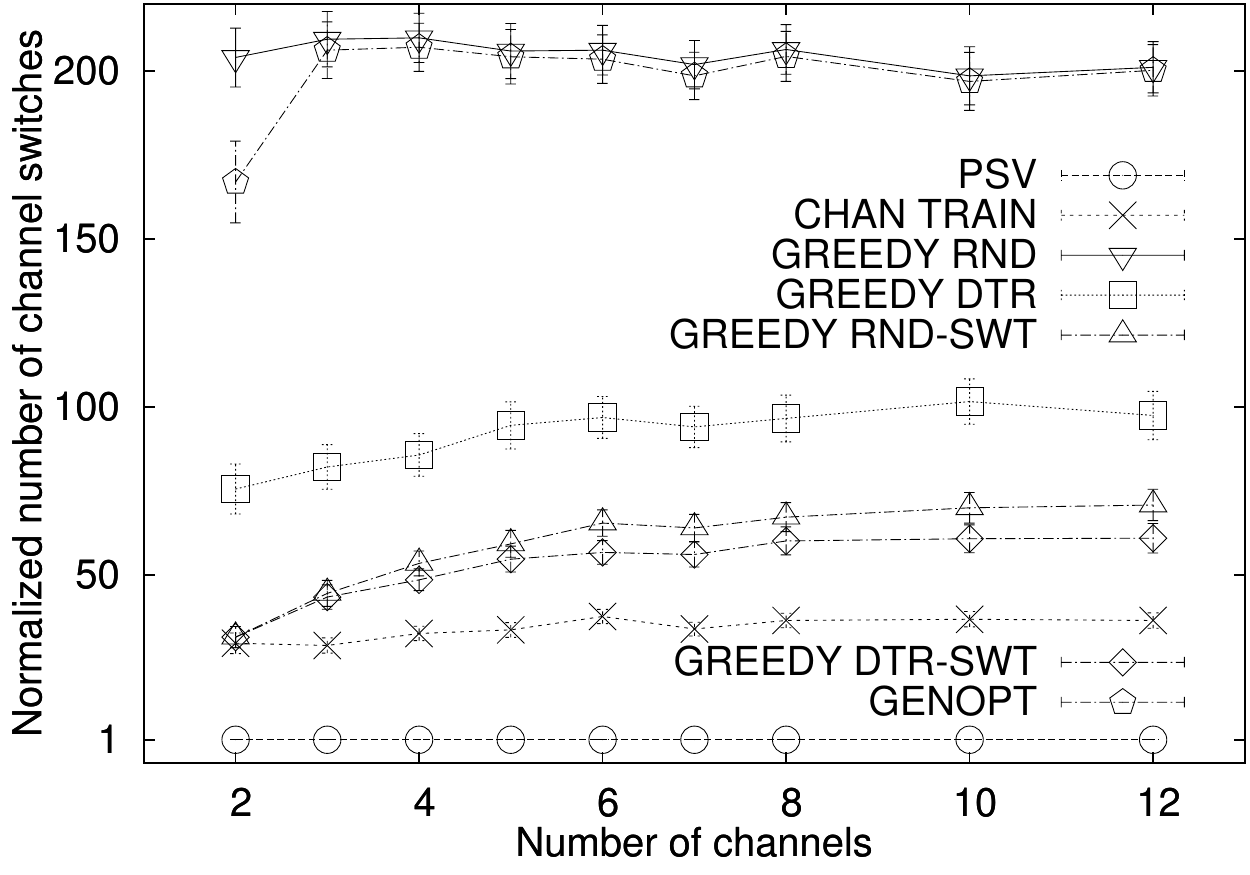}
				\label{fig:F2_chanJumps}
		}
\caption{Evaluation results for the family of \ac{BI} sets $\mathbb{F}_2$ (see Section~\ref{subsec:results_f2} for details).}
 \label{fig:f2_numeric_results}
\end{figure*}

\subsection{Family of \ac{BI} Sets $\mathbb{F}_1$}
\label{subsec:results_f1}


Figure~\ref{fig:F1_discTime} shows the normalized \ac{EMDT} for the family of \ac{BI} sets $\mathbb{F}_1$. We observe that the performance of the individual discovery algorithms, realative to each other, did not significantly change. We also observe that their normalized \ac{EMDT} slightly improved. On the one hand, a potential explanation for this is the less regular structure of \ac{BI} sets in $\mathbb{F}_1$. As shown in Corollary~\ref{cor:mdtcor1}, the upper bound for the makespan of \ac{EMDT}-optimal shedules is $LCM(B)\cdot\lvert C\rvert$, while in $\mathbb{F}_2$ it is $\max(B)\cdot\lvert C\rvert$ (as shown in Corollary~\ref{cor:mdtcor2}). On the other hand, the evaluation for \ac{BI} sets from $\mathbb{F}_1$ was performed with considerably smaller \acp{BI}, in order to make the evaluation of GENOPT computationally feasible. Still, while \ac{EMDT} of \ALG{} strategies are within 1\% of the optimum, \ac{EMDT} of PSV reaches 160\% of the optimum is is increasing with the number of channels.

Figure~\ref{fig:F1_chanJumps} displays the number of channel switches. Similar to \ac{EMDT}, we observe that the relative performance of the discovery strategies remained similar as for $\mathbb{F}_2$. Also, the distance to the optimum has decreased for all approaches, which, again, might be due to the smaller \acp{BI} used for evaluation.


\begin{figure*}
	\centering
		\subfloat[\ac{EMDT}]{
        \includegraphics[width=\evalFigWidth\textwidth]{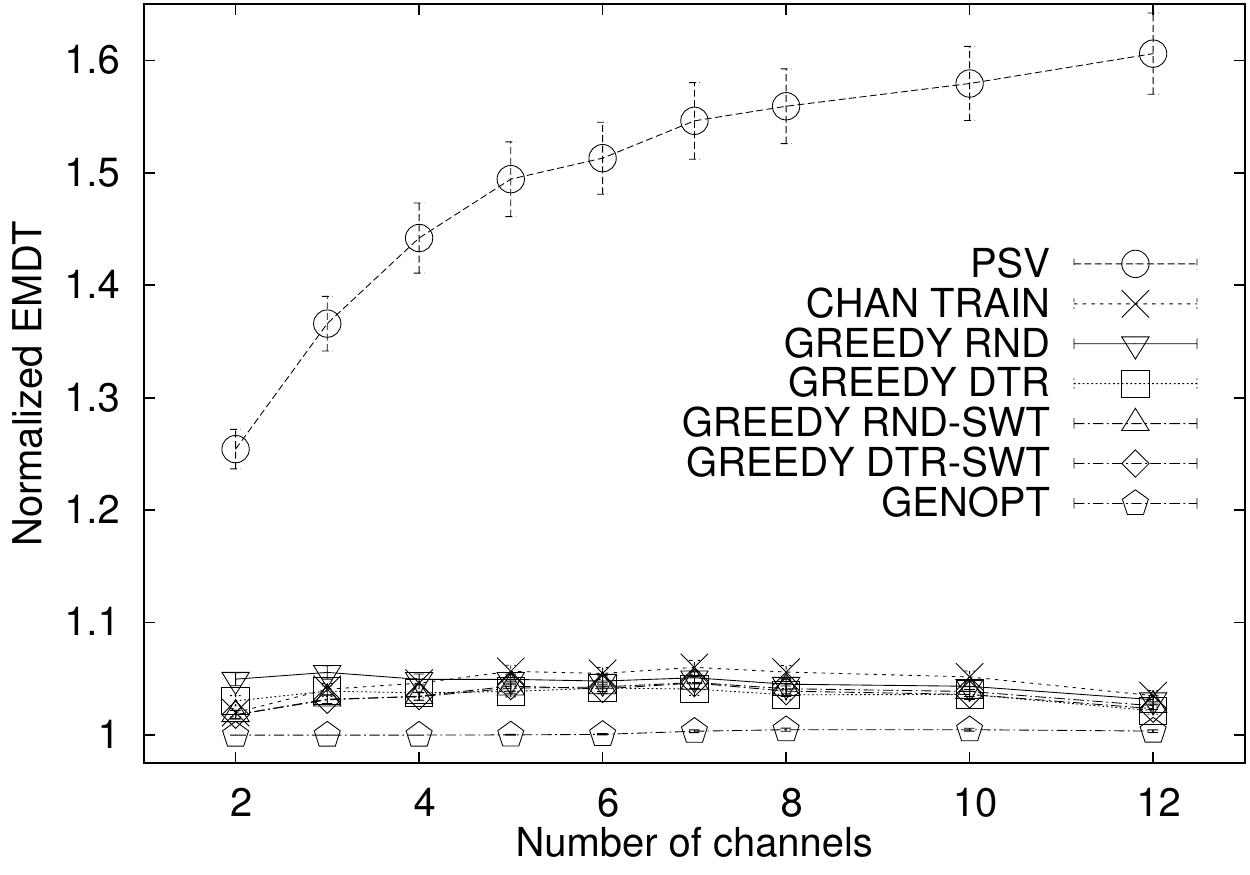}
				\label{fig:F1_discTime}
    } 
	\hfill
		\subfloat[Number of Channel Switches]{		
        \includegraphics[width=\evalFigWidth\textwidth]{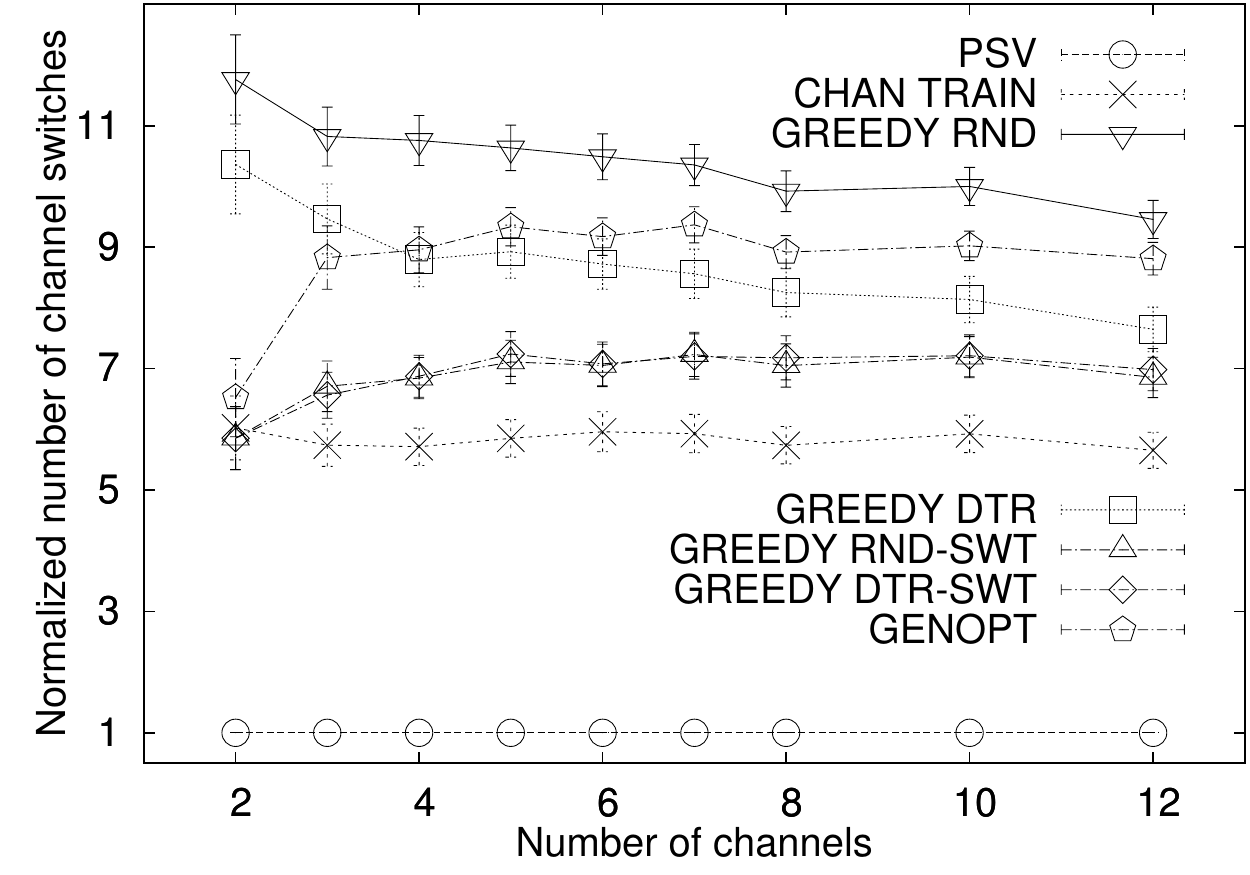}
				\label{fig:F1_chanJumps}
		}
\caption{Evaluation results for the family of \ac{BI} sets $\mathbb{F}_1$ (see Section~\ref{subsec:results_f1} for details).}
 \label{fig:f1_numeric_results1}
\end{figure*}

Figure~\ref{fig:F1_makeSpan} shows the normalized makespan for the family of \ac{BI} sets $\mathbb{F}_1$. For all strategies the normalized makespan improves with the increasing number of channels. E.g., for GENOPT the gap reduces from about 15\% for two channels to about 1\% for 12 channels. The listening schedules of the considered \ALG{} approaches and CHAN TRAIN result in similar makespan at the considered confidence level.

Since makespan is only providing a measure for the last discovery time but not the actual energy usage, Figure~\ref{fig:F1_listenSlots} depicts the normalized number of active slots. The results for all strategies are very similar to those for the makespan, except that they are shifted by about 5\% meaning that the schedules of the \ALG{} approaches, CHAN TRAIN and GENOPT consist of about 5\% idle slots in which no scan is scheduled on any channel due to the fact that no new configurations will be discovered.

\begin{figure*}
	\centering
		\subfloat[Makespan]{
       \includegraphics[width=\evalFigWidth\textwidth]{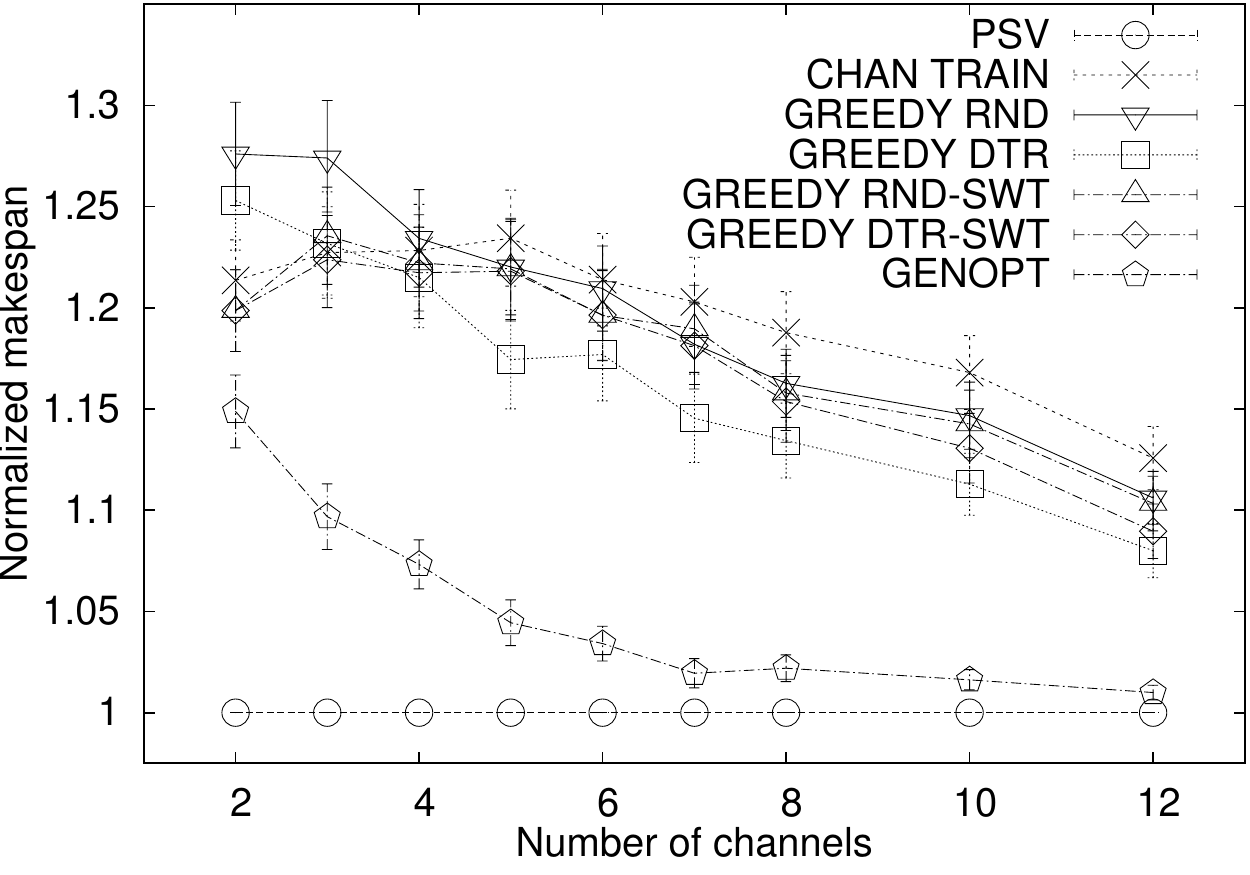}
				\label{fig:F1_makeSpan}
    }
			\hfill
		\subfloat[Number of Active Slots]{
        \includegraphics[width=\evalFigWidth\textwidth]{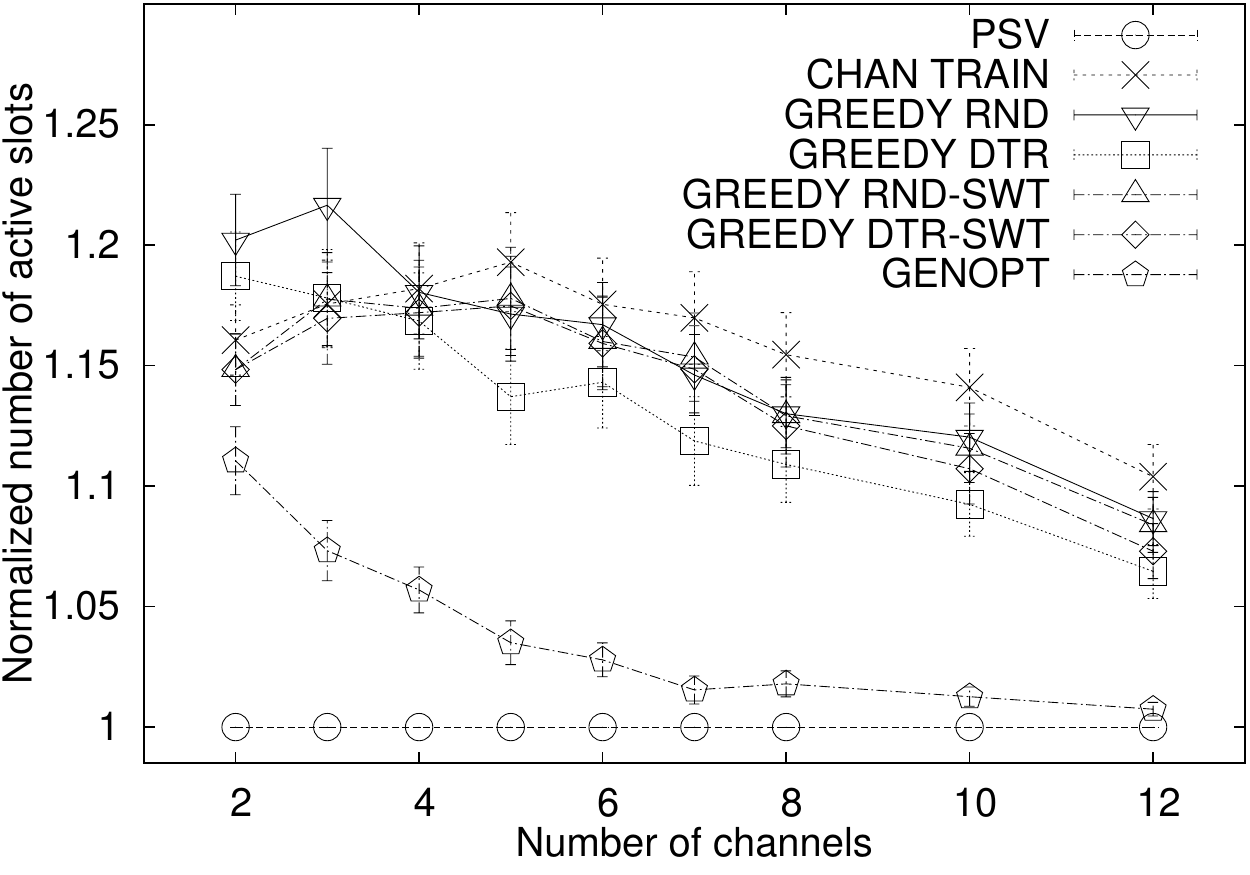}
				\label{fig:F1_listenSlots}
    }	
\caption{Evaluation results for the family of \ac{BI} sets $\mathbb{F}_1$ (see Section~\ref{subsec:results_f1} for details).}
 \label{fig:f1_numeric_results2}
\end{figure*}

\chapter{Selection of Beacon Intervals to Support Efficient Neighbor Discovery}
\label{sec:biSelection}

In the preceding sections we presented and evaluated discovery strategies that support arbitrary finite \ac{BI} sets from $\mathbb{F}_1$. However, the low-complexity GREEDY approaches are \ac{EMDT}-optimal only for \ac{BI} sets from $\mathbb{F}_3$ and makespan-optimal for \ac{BI} sets from  $\mathbb{F}_2$. Although we also presented an \ac{ILP}-based approach GENOPT which is \ac{EMDT}-optimal for \ac{BI} sets from $\mathbb{F}_1$, it suffers from high computational complexity and can only be applied offline and to network environments limited in size.

In this section we discuss and provide intuition for problems arising when constructing listening schedules for \ac{BI} sets in $\mathbb{F}_1$ and $\mathbb{F}_2$, and provide a recommendation for the selection of \ac{BI} sets that support efficient neighbor discovery. This recommendation may be useful for the development of new technologies and communication protocols for wireless communication that use periodic beacon frames for management or synchronization purposes or in case of deploying devices using existing technologies that support a wide range of \acp{BI}, such as IEEE~802.11. In particular, we identified two following issues that can arise when constructing listening schedules.

\begin{itemize}
	\item The makespan of the schedule is greater than $\max(B)\cdot\lvert C\rvert$ time slots (i.e., it is not makespan-optimal).
	\item The \acp{BI} are not discovered in ascending order (i.e., the schedule is not recursive).
\end{itemize}
 
The first problem of listening schedules being not makespan-optimal is caused by idle slots and redundant scans. During idle slots no scan is scheduled due to the fact that no new configurations can be discovered on any channel. In contrast, during redundant slots, a part of neighbors that send beacons during this time slot has already been discovered. Redundant scans are caused by the fact that information acquired in previously scanned slots cannot be completely reused for the discovery of other configurations on the same channel. 

Figure~\ref{fig:GeneralBISetProblem_AddSlots} shows an optimized listening schedule generated by GENOPT for $B=\{1,2,3\}$ and $|C| = 2$ and is an example for a schedule containing redundant scans. Even though $\max(B) = 3$ there are four time slots scanned on channel $c_1$. The reason for the redundant scans is a conflict between the attempt to decrease the discovery time for smaller \acp{BI} in order to minimize \ac{EMDT} and the missing possibility of reusing the information acquired by scanning the time slot scheduled for smaller \acp{BI} $b_1$ and $b_2$ for the discovery of neighbors operating with the larger \ac{BI} $b_3$. From time slot $t_1$ to $t_4$, the schedule finishes the discovery of neighbors using $b_1$ and $b_2$ with a minimum \ac{EMDT}. However, the scan at $t_4$ will not contribute to the discovery of neighbors operating with $b_3$ on $c_1$ since $t_1$ was already scanned on $c_1$. Another listening schedule achieving equal \ac{EMDT} is described by $\mathcal{L} = \{(1,1),(0,3),(1,2)\}$ (see Section~\ref{sec:system} for definitions and notation). In total it requires one time slot less but also increases the discovery time for neighbors using $b_2$ on $c_1$ while decreasing the discovery time for $b_3$ on $c_0$. This schedule suffers from the second identified issue that will be discussed further below.

%


Figure~\ref{fig:GeneralBISetProblem_Idle} depicts an example for a listening schedule generated by GENOPT for $B=\{2,3,4\}$ and $|C| = 2$ containing an idle slot. The schedule first completes the discovery of neighbors with $b_2$ on $c_1$, then $b_2$ and $b_3$ on $c_0$, and then $b_3$ on $c_1$ during time slot $t_6$. Due to the fact that scanning $t_6$ on $c_1$ does not result in any new information regarding neighbors operating with $b_4$, since $t_2$ has already been scanned on $c_1$, time slots $t_7$ and $t_8$ have to be scanned to finish the discovery of all neighbors operating on $c_1$. Time slot $t_9$ is not assigned to channel $c_0$ because $t_5$ has been already scanned and therefore the time slot will be idle.

Similar to the previous example illustrating redundant scans, there also exists an alternative listening schedule $\mathcal{L} = \{(1,2),(0,4),(1,3)\}$ for the given parameter set resulting in the same \ac{EMDT}. This schedule does not contain any idle time slots but, again, completes the discovery of neighbors with larger \acp{BI} before detecting all neighbors with lower \acp{BI}. In this example the discovery time of neighbors operating with $b_3$ on $c_1$ is increased while the discovery of neighbors using $b_4$ on $c_1$ is speeded up.


The second identified issue relates to the order in which configurations with specific \acp{BI} are discovered. In order to reduce \ac{EMDT} for a given \ac{BI} set $B$ and channel set $C$ a listening schedule shall discover neighbors in ascending order of their \acp{BI}. We call such schedules recursive (see Definition~\ref{def:recursive_schedule}).

Examples of this issue were already mentioned in the description of the first issue regarding alternative schedules to the solutions shown in Figures~\ref{fig:GeneralBISetProblem_AddSlots} and \ref{fig:GeneralBISetProblem_Idle}. Another example is shown in Figure~\ref{fig:GeneralBISetProblem_Order}. The discovery of neighbors using $b_1$ or $b_2$ is completed on channel $c_1$ even before a time slot has been scanned on channel $c_2$.
Due to the composition of \acp{BI} in set $B$ a recursive listening schedule does not exist.

\begin{figure*}[ht]
\centering
		\subfloat[Redundant scans \newline \hspace{\textwidth} $B=\{1,2,3\} \quad |C| = 2$]{
        \includegraphics[scale=0.9]{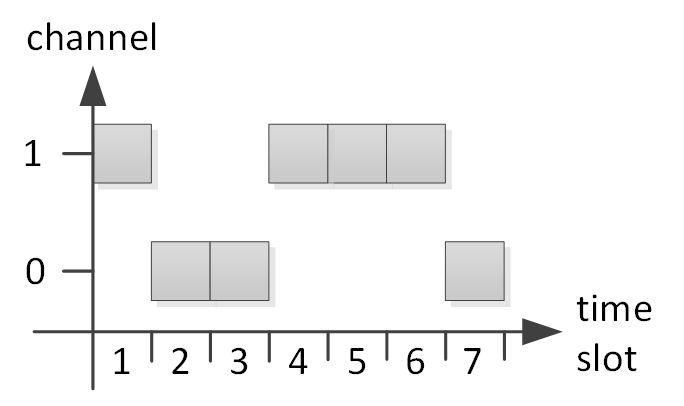}
				\label{fig:GeneralBISetProblem_AddSlots}
    }
		\subfloat[Idle slots \newline \hspace{\textwidth} $B=\{2,3,4\} \quad |C|=2$ ]{
        \includegraphics[scale=0.9]{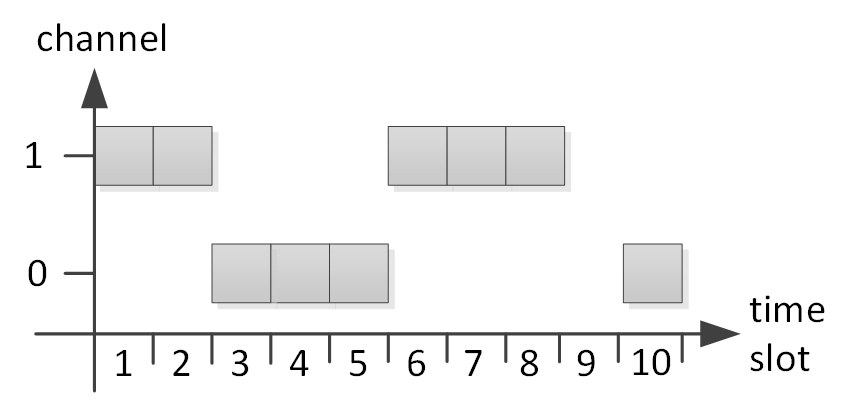}
				\label{fig:GeneralBISetProblem_Idle}
    } \hfill
		\subfloat[Order of \ac{BI} discovery \newline \hspace{\textwidth} $B=\{1,2,3\} \quad |C| = 3$]{
       \includegraphics[scale=0.9]{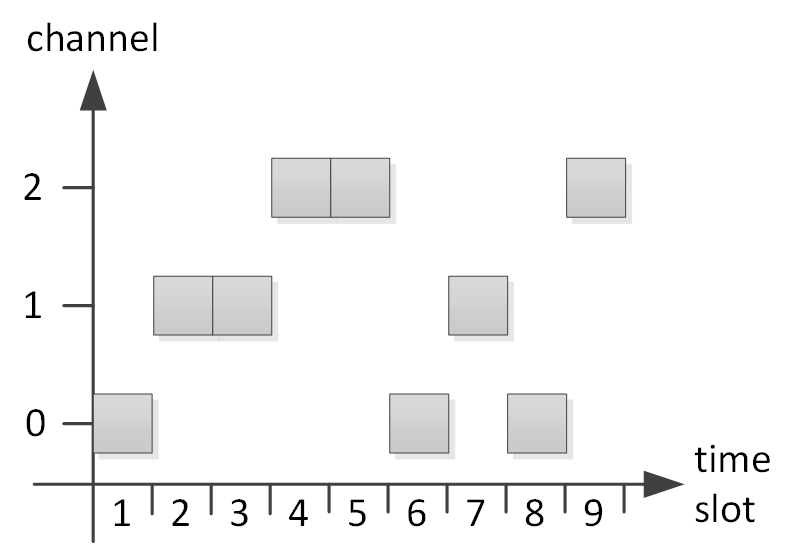}
				\label{fig:GeneralBISetProblem_Order}
    }
\caption{Issues arising when using arbitrary \ac{BI} sets.}
 \label{fig:GeneralBISetProblem}
\end{figure*}


In the following we will give a recommendation on the selection of \acp{BI} sets that help avoiding the issues described above. To support efficient neighbor discovery by eliminating idle time slots as well as redundant scans, and to enable generation of recursive listening schedules, we recommend to prefer \ac{BI} sets from $\mathbb{F}_3$ over those from $\mathbb{F}_1$ and $\mathbb{F}_2$. Note that for the family of \ac{BI} sets $\mathbb{F}_3$ our low-complexity \ALG{} algorithms generate recursive listening schedules that result in optimal \ac{EMDT} and optimal makespan.

Nevertheless, if a higher diversity of \acp{BI} is required, we recommend to select $\mathbb{F}_2$. For this family \ALG{} algorithms and CHAN TRAIN generate listening schedules with optimal makespan and a slightly higher \ac{EMDT} than the optimal value. Depending on whether \ac{EMDT} or the number of channel switches shall be prioritized, we recommend to use either one of the \ALG{} approaches or, alternatively, the CHAN TRAIN strategy.

In case a scenario requires using a \ac{BI} set from $\mathbb{F}_1$, we suggest to use either a \ALG{} algorithm or the CHAN TRAIN strategy. Both options allow to generate listening schedules with an \ac{EMDT} that is within 5\% of the optimal value for the evaluated \ac{BI} sets. If a dynamic computation of listening schedules directly on the devices themselves is not required and the \ac{BI} set $B$ and channel set $C$ are static, GENOPT can be used to compute a schedule offline and store it on the devices.


\chapter{Conclusion}
\label{sec:conclusion}

This paper extends our previous work on asynchronous passive multi-channel discovery. It presents novel discovery strategies that are not limited to \acp{BI} defined in the IEEE~802.15.4 standard but support a broad range of \ac{BI} sets. In particular, we characterizes a family of low-complexity algorithms, named \ALG{}, minimizing makespan and \ac{EMDT} for the broad family of \ac{BI} sets $\mathbb{F}_3$, where each \ac{BI} is an integer multiple of all smaller \acp{BI}. Notably, this family completely includes and significantly extends \ac{BI} sets supported by the IEEE~802.15.4 standard. In addition, we develop an \ac{ILP}-based approach minimizing \ac{EMDT} for arbitrary \ac{BI} sets that, however, exhibits high computational complexity and memory consumption and is, therefore, only applicable for offline computation and for network environments with limited size.

In addition to analytically proving optimality for the developed approaches for \ac{BI} sets from $\mathbb{F}_3$, we performed an extensive numerical evaluation on two more general families of \ac{BI} including the most general ones. We compared our strategies with the passive discovery strategy defined by the IEEE~802.15.4 standard and showed that our approaches exhibit superior performance w.r.t. several performance metrics. Finally, we provide recommendations on the selection of the \ac{BI} sets that are as nonrestrictive as possible, at the same time allowing for an efficient neighbor discovery.

Our future work will focus on collaboration between devices by sharing gossip information about discovered neighbors in their beacon messages. By including this information into listening schedules computation, devices will be able to speed up the discovery of their neighbors. In addition, we will evaluate the developed approaches in realistic environments using simulations and experiments.

\acrodef{BI}{Beacon Interval} \acrodefplural{BI}[BI's]{Beacon Intervals}
\acrodef{BO}{Beacon Order} \acrodefplural{BO}[BO's]{Beacon Orders}
\acrodef{DTN}{Delay Tolerant Network} \acrodefplural{DTN}[DTN's]{Delay Tolerant Networks}
\acrodef{GCD}{Greatest Common Divisor} 
\acrodef{EMDT}{Expected Mean Discovery Time} \acrodefplural{EMDT}[EMDT's]{Expected Mean Discovery Times}
\acrodef{ILP}{Integer Linear Program} \acrodefplural{ILP}[ILP's]{Integer Linear Programs}
\acrodef{LCM}{Least Common Multiple} \acrodefplural{LCM}[LCM's]{Least Common Multiples}
\acrodef{LP}{Linear Program} \acrodefplural{LP}[LP's]{Linear Programs}
\acrodef{MAC}{Media Access Control}
\acrodef{MDT}{Mean Discovery Time} \acrodefplural{MDT}[MDT's]{Mean Discovery Times}
\acrodef{TU}{Time Unit} \acrodefplural{TU}[TU's]{Time Units}

\bibliography{references}

\newpage
\begin{appendix}
\end{appendix}

\end{document}